\documentclass[11pt]{amsart}
\usepackage{amsmath,amssymb,amsfonts,amsbsy}
\usepackage{tikz-cd}
\usetikzlibrary{shapes}
\usepackage{amsbsy}
\usepackage{amscd}
\usepackage{wasysym}
\usetikzlibrary{matrix,arrows,decorations.pathmorphing}
\usepackage{graphicx}
\usepackage{float}
\usepackage{setspace}
\usepackage[margin=1.0in]{geometry}

\newtheorem{thm}{Theorem}[section]
\newtheorem{lem}[thm]{Lemma}
\newtheorem{cor}[thm]{Corollary}
\newtheorem{prop}[thm]{Proposition}

\theoremstyle{definition}
\newtheorem{example}[thm]{Example}
\theoremstyle{definition}
\newtheorem{defn}[thm]{Definition}
\theoremstyle{definition}
\newtheorem{remark}[thm]{Remark}


\newcommand{\rmv}[1]{}

\newcommand{\C}{\mathbb{C}}
\newcommand{\N}{\mathbb{N}}

\newcommand{\R}{\mathbb{R}}

\begin{document}

\title[Factorizations of Quantum Channels]{Factorizations of Quantum Channels}
\author[J. Levick,]{Jeremy Levick$^{1}$}

\address{$^1$Department of Mathematics \& Statistics, University of Cape Town, Cape Town, South Africa}

\begin{abstract}
In this paper, we discuss the relationship between the geometry of correlation matrices and certain quantum channels defined by means of such matrices: the Schur product channels. In particular, we find some relationships between complementary channels of Schur product channels, the geometry of the set of correlation matrices, and the notion of factorizability of a Schur product channel. We mention how these results generalize to non-Schur product channels, and we discuss some examples, as well as mention the relationship to notions such as quantum privacy.
\end{abstract}

\subjclass[2010]{47L90, 81P45, 94A40}

\keywords{Schur product, correlation matrix, quantum channel, factorization.}

\begin{abstract}
We find a connection between the existence of a factorization of a quantum channel and the existence of low-rank solutions to certain linear matrix equations. Using this, we show that if a quantum channel is factorized by a direct integral of factors, it must lie in the convex hull of quantum channels which are factorized respectively by the factors in the direct integral. We use this to characterize some non-trivial extreme points in the set of factorizable quantum channels and give an example.
\end{abstract}

\maketitle

\section{Introduction}
Quantum channels are well-studied in quantum information theory as the quantum analogue of classical channels. Mathematically, they are represented by completely positive, trace-preserving maps on von Neumann algebras, usually matrix algebras. It is well-known that, by the Stinespring dilation theorem \cite{stinespring}, a quantum channel $\Phi$ can be dilated so that it is the restriction of the action of a unitary operation on a larger space:
$$\Phi(X) = (\mathrm{id}\otimes \mathrm{Tr})\bigl(U(X\otimes vv^*)U^*\bigr)$$ for some $v \in \C^p$ and $U\in M_{np}(\C)$ if $\Phi:M_n(\C)\rightarrow M_n(\C)$. Physically speaking, with $\C^n$ as our system of interest, we consider $\C^p$ as an environment. Initially, the system of interest and the environment are uncoupled, hence we can write the initial state as $X\otimes \rho$, and the environment is in a pure state, $\rho = vv^*$. Then, the action of $\Phi$ on the system of interest corresponds to evolving the joint system-and-environment unitarily, and restricting attention to the system of interest.\\
An obvious question to ask is whether the state $\rho$ must be a pure state in this representation; is there some classification of the channels for which we can choose $\rho = \sum_i p_i v_iv_i^*$. In particular, we can ask whether a channel has a representation of the form 
\begin{equation}\label{fact}\Phi(X) = (\mathrm{id}\otimes \mathrm{Tr})\bigl(U(X\otimes I)U^*\bigr)\end{equation} for $I\in M_p(\C)$ for some $p$. It turns out that not all channels admit a representation of the form \ref{fact}. Those that do are called \emph{factorizable}; the name comes from the fact that if such a form exists for $\Phi$, then $\Phi$ can be factored as $$\Phi = \alpha^*\circ \beta$$ for $\alpha,\beta : M_n(\C)\rightarrow M_p(\C)$ both isometries \cite{haagerup}. As a trivial example, notice that if $\Phi(X) = UXU^*$ for a unitary $U$, then $\Phi$ is factorizable by means of $\C$:
$$\Phi(X) = (\mathrm{id}\otimes \mathrm{Tr})\bigl( U(X\otimes 1)U^*\bigr) = 1UXU^*.$$

Factorizable channels admit dilations that are also quantum channels \cite{haagerup}, and non-factorizable channels provide counterexamples to the Asymptotic Quantum Birkhoff conjecture \cite{haagerup}. Any extreme point of the set of quantum channels on $M_n(\C)$ that is not a unitary adjunction is non-factorizable, so they are also interesting for the purpose of understanding the geometry of the set of quantum channels on $M_n(\C)$. \\
In fact, the set of factorizable channels is convex, and one of the goals of this paper is to better understand the convex structure of the set of factorizable channels on $M_n(\C)$. To this end, we exhibit an example of a quantum channel that is extreme in the set of factorizable quantum channels, but is not a unitary adjunction. We also study how the factorizations of a channel $\Phi$ relate to factorizations for the extreme factorizable channels that contain $\Phi$ in their convex hull.\\
A special class of quantum channels are those that act diagonally; this corresponds to taking the Schur product of each input with some fixed correlation matrix--a positive semidefinite matrix with $1$s down the diagonal. Because of their simpler structure, these Schur product channels are easy to study as test-cases. The geometry of the set of $n\times n$ correlation matrices is intimately bound up with factorizability of the associated Schur product channels. Lastly, we mention that the study of which Schur product channels are factorizable connects with work on Connes' embedding problem \cite{haagerup}\cite{dykema}, and is interesting in its own right. \\
The paper is organized as follows: in Section $1$, we introduce quantum channels and Schur product maps, and give the necessary background. Section $2$ introduces factorizability and various equivalent conditions for a channel $\Phi$ to admit a factorization. In Section $3$ we establish a connection with linear matrix inequalities, and show that whether or not a channel is factorizable is equivalent to the question of deciding whether a linear matrix inequality over a certain operator system has solutions of sufficiently low rank. In the final section, we use the results of the previous sections to show that if a channel $\Phi$ is factorized by means of a direct integral of factor von Neumann algebras, then $\Phi$ lies in the convex hull of channels which are respectively factored by the individual factors; we use this to obtain a sufficient condition for a channel to be extreme in the set of factorizable channels, and provide an example. \\
Our notation is, we hope, fairly transparent, but we outline our usage: $M_{n,m}(\C)$ is the $n\times m$ complex matrices, and $M_n(\C):=M_{n,n}(\C)$. We use $^*$ to indicate the conjugate transpose of a matrix, $^{\dagger}$ to indicate the Hermitian adjoint of a linear map on $M_n(\C)$; so 
$\Phi^{\dagger}(X)$ is distinct from $\Phi(X)^*$. If $X\in M_{n,m}(\C)$ then $X^T$ is its transpose, and $\overline{X}$ is the entry-wise complex conjugate of $X$.\\
$X \succeq 0$ means $X$ is positive semidefinite, and $X\succeq Y$ means $X-Y\succeq 0$; $X\succ 0$ means $X$ is positive definite. The set of $n\times n$ Hermitian matrices is $S_n$, and the set of $n\times n$ positive semidefinite matrices is $S_n^+$; the set of $n\times n$ positive definite matrices is $S_n^{++}$. \\
$\{E_{ij}\}_{i,j=1}^{n,m}$ are the standard basis vectors in $M_{n,m}(\C)$ with $0$s everywhere except in the $i^{th}$ row and $j^{th}$ column where there is a single $1$. $\{e_i\}_{i=1}^n$ are the standard basis vectors for $\C^n$, considered as $n\times 1$ column vectors.\\
Lastly, we note that we follow the physicists' convention and take our inner products to be anti-linear in the first argument, so that we can easily identify $\langle v,w\rangle = v^*w$, where $v^*$ is the conjugate transpose of the column vector $v$.

\section{Quantum Channels}
\begin{defn} A linear map $\Phi:M_n(\C)\rightarrow M_m(\C)$ is positive if $$X\succeq 0 \Longrightarrow \Phi(X)\succeq 0.$$
\end{defn}

To any linear map acting on a matrix algebra we may define its $k^{th}$ ampliation, the map $(\mathrm{id}\otimes \Phi): M_k(\C)\otimes M_n(\C)\rightarrow M_k(\C)\otimes M_n(\C)$. Essentially, this map acts blockwise on the $k^2$ $n\times n$ blocks of an $nk\times nk$ matrix. 

\begin{defn} If $\Phi:M_n(\C)\rightarrow M_m(\C)$ is a linear map whose $k^{th}$ ampliation is positive for all $k\in \N$, $\Phi$ is said to be completely positive. 
\end{defn}

Completely positive maps were completely characterized by Choi:
\begin{thm}[Choi]\label{choi}\cite{choi}
Given $\Phi:M_n(\C)\rightarrow M_m(\C)$ the following are equivalent:
\begin{enumerate}
\item $\Phi$ is completely positive
\item \label{choimtx}The matrix $M_n(\C)\otimes M_m(\C) \ni C_{\Phi}:=\sum_{i,j=1}^n E_{ij}\otimes \Phi(E_{ij}) \succeq 0$
\item \label{kraus} There exist matrices $\{K_i\}_{i=1}^p$, $K_i \in M_{m,n}(\C)$, that satisfy 
$$\Phi(X) = \sum_{i=1}^p K_i X K_i^*.$$
\end{enumerate}
The matrices in condition \ref{kraus} are called Kraus operators for the channel, and the expression $$\Phi(X) = \sum_{i=1}^p K_i X K_i^*$$ is called an operator sum representation of the channel. The matrix $C_{\Phi}$ is called the Choi matrix, and its existence establishes a one-to-one correspondence between CP maps from $M_n(\C)$ to $M_m(\C)$ and positive semidefinite matrices in $M_n(\C)\otimes M_m(\C)$. 
\end{thm}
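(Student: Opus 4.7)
The plan is to prove the cycle $(1)\Rightarrow(2)\Rightarrow(3)\Rightarrow(1)$, since each link uses a distinct and reusable ingredient (positivity of the maximally entangled state, a spectral decomposition of the Choi matrix, and the closure of positive cones under $*$-conjugation).

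First, for $(1)\Rightarrow(2)$, I would observe that the block matrix $\Omega := \sum_{i,j=1}^n E_{ij}\otimes E_{ij} \in M_n(\C)\otimes M_n(\C)$ is positive semidefinite: explicitly $\Omega = vv^*$ where $v = \sum_{i=1}^n e_i\otimes e_i \in \C^n\otimes\C^n$. Complete positivity of $\Phi$ implies in particular its $n$-th ampliation is positive, so
\[
C_\Phi \;=\; (\mathrm{id}\otimes \Phi)(\Omega) \;\succeq\; 0.
\]

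Next, for $(2)\Rightarrow(3)$, since $C_\Phi\in M_n(\C)\otimes M_m(\C) \cong M_{nm}(\C)$ is positive semidefinite of some rank $p$, I would write it via its spectral decomposition as $C_\Phi = \sum_{i=1}^p w_iw_i^*$ with $w_i\in\C^n\otimes\C^m$. Using the natural unitary identification $\C^n\otimes\C^m \to M_{m,n}(\C)$ defined on elementary tensors by $e_j\otimes e_k\mapsto E_{kj}$, each $w_i$ corresponds to a matrix $K_i\in M_{m,n}(\C)$. The candidate Kraus formula $\Psi(X):=\sum_i K_i X K_i^*$ is then compared with $\Phi$ by computing the Choi matrix of $\Psi$ block by block: one checks $\Psi(E_{ij}) = \Phi(E_{ij})$ for all matrix units (equivalently, the $(i,j)$-block of $C_\Psi$ agrees with that of $C_\Phi$), and extends linearly to conclude $\Psi = \Phi$. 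I would expect this to be the main obstacle, mostly because of index-bookkeeping: one must pick a vectorization convention and track it carefully so that the $K_i$ act on the correct side. A clean way is to verify directly that $\sum_i K_i E_{ij} K_i^* = $ the $(i,j)$-block of $\sum_i w_i w_i^*$, which is a routine but delicate calculation.

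Finally, for $(3)\Rightarrow(1)$, given $\Phi(X) = \sum_{i=1}^p K_i X K_i^*$, the $k$-th ampliation is
\[
(\mathrm{id}\otimes \Phi)(Y) \;=\; \sum_{i=1}^p (I_k\otimes K_i)\, Y\, (I_k\otimes K_i)^*,
\]
which is a finite sum of maps of the form $Y\mapsto AYA^*$. Each such map preserves positive semidefiniteness, hence so does their sum, for every $k\in\N$; this gives complete positivity of $\Phi$ and closes the cycle.
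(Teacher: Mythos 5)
Your proposal is correct and follows essentially the same route as the paper's own proof: the cycle $(1)\Rightarrow(2)\Rightarrow(3)\Rightarrow(1)$, using positivity of $\sum_{i,j}E_{ij}\otimes E_{ij}$ for the first implication, a rank-one decomposition of $C_\Phi$ together with the vectorization identification $\C^n\otimes\C^m\cong M_{m,n}(\C)$ and block-by-block verification on matrix units for the second, and positivity-preservation of conjugations $Y\mapsto (I\otimes K_l)Y(I\otimes K_l)^*$ for the third. The only cosmetic differences are that you exhibit $\Omega=vv^*$ explicitly and use the spectral decomposition where the paper allows an arbitrary resolution $C_\Phi=\sum_i k_ik_i^*$.
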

Although this result is well-known and can be found in the literature, we will provide a proof for illustrative purposes.

\begin{proof}
The first condition implies the second easily, since $C_{\Phi} = (\mathrm{id}\otimes \Phi)\bigl(\sum_{i,j=1}^n E_{ij}\otimes E_{ij}\bigr)$, and $\sum_{i,j=1}^n E_{ij}\otimes E_{ij} \succeq 0$. Hence $C_{\Phi}\succeq 0$. \\
To see that the second implies the third, suppose $C_{\Phi}\succeq 0$, so that there exist $\{k_i\}_{i=1}^p\in \C^{nm}$ satisfying $$C_{\Phi}= \sum_{i=1}^p k_ik_i^*.$$
Consider the $(i,j)$ block of $C_{\Phi}$, $\Phi(E_{ij})$. If $k_i = \bigoplus_{j=1}^m k_{ij}$ is a division of $k_i$ into $n$ blocks of size $m\times 1$, then this is 
$$\Phi(E_{ij}) = \sum_{l=1}^p k_{li}k_{lj}^*.$$
Now let $K_i$ be the $m\times n$ matrix whose vectorization is $k_i$, that is, $K_i = \sum_{j=1}^n k_{ij}e_j^*$. Then $K_lE_{ij}K_l^* = k_{li}k_{lj}^*$, and so 
$$\Phi(E_{ij}) = \sum_{l=1}^p k_{li}k_{lj}^* = \sum_{l=1}^p K_lE_{ij}K_l^*.$$
That $\Phi(X) = \sum_{l=1}^p K_l X K_l^*$ for general $X$ now follows by linearity.\\
Finally, we show that the third conditions guarantees the first. If $\Phi(X) = \sum_{i=1}^p K_i XK_i^*$, then, for $M=\sum_{i,j=1}^k E_{ij}\otimes M_{ij} \succeq 0$,  
\begin{align*}(\mathrm{id}\otimes \Phi)(M) &= \sum_{i,j=1}^k E_{ij}\otimes \Phi(M_{ij})\\
& = \sum_{i,j=1}^k E_{ij}\otimes \bigl(\sum_{l=1}^p K_lM_{ij}K_l^*\bigr)\\
& = \sum_{l=1}^p(I\otimes K_l)\bigl(\sum_{i,j=1}^k E_{ij}\otimes M_{ij}\bigr)(I\otimes K_l^*)\\
&= \sum_{l=1}^p (I\otimes K_l)M(I\otimes K_l)^*.
\end{align*}

Each term in the sum preserves positivity, and a sum of positive matrices is positive, so the result is positive
\end{proof}

\begin{remark}\label{krausunique} Notice that in the proof of the theorem, the Kraus operators for $\Phi$ correspond to vectors appearing in a resolution of $C_{\Phi}$. Thus, we have as much freedom to pick Kraus operators as we do to pick vectors $\{k_i\}_{i=1}^p$ such that $C_{\Phi} = \sum_{i=1}^p k_ik_i^*$. A canonical choice of $\{k_i\}$ are the scaled eigenvectors of $C_{\Phi}$. Even this choice is not necessarily unique if $C_{\Phi}$ does not have simple spectrum.
\end{remark}

\begin{defn} A quantum channel is a completely positive map that is also trace-preserving.
\end{defn}

\begin{defn} A completely positive map that preserves the identity, i.e., $\Phi(I_n) = I_m$ is said to be unital.
\end{defn}

Both trace-preservation and unitality can be expressed in terms of the Kraus operators: a channel is trace-preserving if 
\begin{align*} \mathrm{Tr}(X)& = \mathrm{Tr}(\Phi(X)) \\
& = \sum_{i=1}^p\mathrm{Tr}(K_iXK_i^*) \\
& =  \mathrm{Tr}(\sum_{i=1}^p K_i^*K_i X)\end{align*} for all $X \in M_n(\C)$; hence $\mathrm{Tr}(X(I_n - \sum_{i=1}^p K_i^*K_i)) = 0$ and so 
$$\sum_{i=1}^p K_i^*K_i = I_n.$$

Unitality corresponds to 
$$\Phi(I_n) = \sum_{i=1}^p K_iK_i^* = I_m.$$

These two notions are dual to each other, in the following sense. We equip the space $M_n(\C)$ with the Hilbert-Schmidt or trace inner product: 
$$\langle A,B\rangle = \mathrm{Tr}(A^*B).$$
Then we can define an adjoint $\Phi^{\dagger}$ (we use $\dagger$ for the adjoint of a linear map between matrix spaces, and $*$ for the adjoint of a matrix to avoid confusion between expressions such as $\Phi(X)^*$ and $\Phi^{\dagger}(X)$)to a CP map via
$$\mathrm{Tr}(\Phi(A)^*B) = \mathrm{Tr}(A^*\Phi^{\dagger}(B)).$$

It is not hard to see that if $\Phi(X) = \sum_{i=1}^p K_iXK_i^*$, then $\Phi^{\dagger}(Y) =\sum_{i=1}^p K_i^*YK_i$; so the adjoint of a completely positive map is completely positive, and if $\Phi$ is trace-preserving, its adjoint is unital, and vice-versa. 

\subsection{Complement Channels}
A famous characterization of quantum channels is via the Stinespring dilation theorem, which we state in the following form:
\begin{thm}[Stinespring]\label{stinespring}\cite{stinespring} If $\Phi:M_n(\C)\rightarrow M_m(\C)$ is a quantum channel, whose Choi matrix has rank $p$, then there exists a unitary $U\in M_{m}(\C)\otimes M_p(\C)$ such that 
$$\Phi(X) = (\mathrm{id}\otimes \mathrm{Tr})\bigl(U(X\otimes E_{11})U^*\bigr).$$
\end{thm}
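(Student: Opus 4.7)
The plan is to build the desired unitary directly from a minimal Kraus representation of $\Phi$. Since $C_{\Phi}$ has rank exactly $p$, \thmref{choi} produces Kraus operators $K_1, \dots, K_p$ with $\Phi(X) = \sum_{i=1}^p K_i X K_i^*$, and trace preservation gives $\sum_{i=1}^p K_i^* K_i = I_n$. (I read the statement in the square case $m = n$, which is the setting of the surrounding discussion of factorizability.)

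The central object is the Stinespring-style isometry $V : \C^n \to \C^n \otimes \C^p$ defined on vectors by $V v = \sum_{i=1}^p K_i v \otimes e_i$. A direct computation yields $V^*V = \sum_i K_i^* K_i = I_n$, so $V$ is an isometry, and
\[
V X V^* = \sum_{i,j=1}^p K_i X K_j^* \otimes E_{ij},
\]
from which $(\mathrm{id} \otimes \mathrm{Tr})(V X V^*) = \sum_i K_i X K_i^* = \Phi(X)$. What remains is to massage $V X V^*$ into the form $U(X \otimes E_{11})U^*$.

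For that I would introduce the ``initialization'' isometry $W : \C^n \to \C^n \otimes \C^p$ given by $Wv = v \otimes e_1$, which satisfies $W X W^* = X \otimes E_{11}$. The isometries $V$ and $W$ both map $\C^n$ into the $np$-dimensional space $\C^n \otimes \C^p$, so their ranges are both $n$-dimensional and their orthogonal complements both have dimension $n(p-1)$. Picking any unitary identification of these complements and combining it with the partial isometry $V W^*$ restricted to $\mathrm{Range}(W)$ produces a unitary $U \in M_n(\C) \otimes M_p(\C)$ with $U W = V$; then $V X V^* = U W X W^* U^* = U(X \otimes E_{11})U^*$, and applying $\mathrm{id} \otimes \mathrm{Tr}$ finishes the proof.

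The only non-routine step is the isometry-to-unitary extension; because the ranges of $V$ and $W$ have equal dimension inside $\C^n \otimes \C^p$, their orthogonal complements do as well, which is precisely what makes the extension possible. Everything else is direct calculation with the Kraus form of $\Phi$.
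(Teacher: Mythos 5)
Your proof is correct and follows essentially the same route as the paper: form the Stinespring isometry $V = \sum_{i=1}^p K_i \otimes e_i$ from a minimal Kraus family, extend it to a unitary $U$ on $\C^n\otimes\C^p$, and take the partial trace. The only difference is presentational---the paper completes $V$ as the first block column of $U$ and verifies the identity by a blockwise trace computation, whereas you encode the same completion via the relation $UW=V$ with $Wv = v\otimes e_1$, which cleanly avoids the index bookkeeping.
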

Again, we will provide a proof for illustrative purposes. 
\begin{proof} Let $V = \sum_{i=1}^p  K_i\otimes e_i$ for $\{K_i\}_{i=1}^p$ a set of Kraus operators for $\Phi$. Since $\Phi$ is trace-preserving we have
$$V^*V = \sum_{i,j=1}^p   K_i^*K_j \otimes e_i^*e_j = \sum_{i=1}^p K_i^*K_i = I_n$$
and so $V$ is an isometry from $\C^n \rightarrow \C^{mp}$. We can complete this to a unitary on $\C^{mp}$: $U = \sum_{i=1,j=1}^p U_{ij} \otimes E_{ij}$, where $U_{i1} = K_i$. The rest is just computation:

\begin{align*} (\mathrm{id}\otimes\mathrm{Tr})\bigl(U(X\otimes E_{11})U^*) & =  \sum_{i,j,k,l=1}^p U_{ij}XU_{kl}^*\mathrm{Tr}(E_{ij}E_{11}E_{lk})\\
&=\sum_{i,k=1}^p U_{i1}XU_{k1}^*\mathrm{Tr}(E_{ik}) \\
& = \sum_{i=1}^p K_i XK_i^*\\
& = \Phi(X).\end{align*} 
\end{proof}

We use the existence of the Stinespring dilation to define the complementary channel to $\Phi$ as follows:
\begin{defn} Let $\Phi:M_n(\C)\rightarrow M_m(\C)$ be a quantum channel with $p$ Kraus operators whose Stinespring representation is 
$$\Phi(X) = (\mathrm{id}\otimes\mathrm{Tr})\bigl(U(X\otimes E_{11})U^*\bigr).$$
Then its complement, or complementary channel, denoted $\Phi^C$, is defined by taking the other partial trace:

$$\Phi^C(X)=(\mathrm{Tr}\otimes \mathrm{id})\bigl( U(X\otimes E_{11})U^*\bigr).$$
It is a map from $M_n(\C)$ to $M_p(\C)$. 
\end{defn}

Given a choice $\{K_i\}_{i=1}^p$ of Kraus operators, we see that 
\begin{align*} \Phi^C(X) & = (\mathrm{Tr}\otimes \mathrm{id})\bigl(U(X\otimes E_{11})U^*\bigr) \\
& = (\mathrm{Tr}\otimes \mathrm{id})\bigl( \sum_{i,j,k,l=1}^p U_{ij}XU_{kl}^*\otimes E_{ij}E_{11}E_{lk}\bigr) \\
& = (\mathrm{Tr}\otimes\mathrm{id})\bigl( \sum_{i,j=1}^p U_{i1}XU_{k1}^*\otimes E_{ik}\bigr)\\
& = \sum_{i=1}^p \mathrm{Tr}(K_k^*K_iX) E_{ik}.\end{align*}

If $\Phi$ is trace-preserving, then so is $\Phi^C$, since the former is equivalent to $\sum_{i=1}^p K_i^*K_i = I_n$, and hence $\mathrm{Tr}(\Phi^C(X)) = \sum_{i=1}^p \mathrm{Tr}(K_i^*K_iX) = \mathrm{Tr}(X)$ . \\
Even when $\Phi$ is not trace-preserving, we will use the expression 
\begin{equation}\label{complement}\Phi^C(X) = \sum_{i,j=1}^p \mathrm{Tr}(K_i^*K_j X)E_{ji}\end{equation} to define a complement for $\Phi$.

\begin{remark} Recall Remark \ref{krausunique}, where we observed that the Kraus operators are not necessarily unique; this obviously affects our expression for $\Phi^C$ using Equation \ref{complement}. We will remove this ambiguity now. Suppose we have two different sets of Kraus operators $\{K_i\}_{i=1}^p$, and $\{L_i\}_{i=1}^q$, corresponding to two different resolutions of $C_{\Phi}$:
$$C_{\Phi} = \sum_{i=1}^p k_ik_i^* = \sum_{i=1}^q l_il_i^*.$$
Without loss of generality, assume $q\geq p$ (we may as well also take $p$ to be the rank of $C_{\Phi}$, and thus the minimum number of Kraus operators).
Let $K =\sum_{i=1}^p k_ie_i^*$, $L=\sum_{i=1}^q l_ie_i^*$, so $KK^* = LL^* = C_{\Phi}$. 
Then $L=KU^*$ for $U$ satisfying $U^*U=I$, and so $l_i = \sum_{j=1}^p \overline{u_{ij}}k_j$, and so also $L_i = \sum_{j=1}^p \overline{u_{ij}}K_j$.
\end{remark}
The above remark tells us how $\Phi^C$ transforms by taking taking different Kraus operators to represent $\Phi$. Let $\{K_i\}_{i=1}^p$, $\{L_i\}_{i=1}^q$, and $U$ all be as in the preceding remark. Then
\begin{align*} \overline{U}\Phi^C(X)U^T & = \sum_{i,j=1}^{q,p}\sum_{k,l=1}^p\sum_{r,s=1}^{p,q} \overline{u_{ij}}u_{sr}\mathrm{Tr}(K_l^*K_kX) E_{ij}E_{kl} E_{rs} \\
& = \sum_{i,s=1}^qE_{is} \bigl(\sum_{j,r=1}^p \overline{u_{ij}}u_{sr}\mathrm{Tr}(K_r^*K_jX)\bigr)\\
& = \sum_{i,s=1}^qE_{is} \mathrm{Tr}\biggl((\sum_{r=1}^p u_{sr}K_r^*)(\sum_{j=1}^p \overline{u_{ij}}K_j)X\biggr) \\
& = \sum_{i,r}E_{ir} \mathrm{Tr}(L_r^*L_iX)\end{align*} which is the expression for $\Phi^C$ we obtain if we choose $\{L_i\}$ as our set of Kraus operators.\\
Since $U^T\overline{U} = (\overline{U^*U})= I$, Equation \ref{complement} defines the complementary channel of $\Phi$ up to adjunction by an isometry. $\mathrm{Tr}(U\Phi^C(X)U^*) = \mathrm{Tr}(\Phi^C(X))$, so this is still trace-preserving.
Keeping this in mind, we also find an expression for the adjoint of the complement: 

$$\mathrm{Tr}(\Phi^{C\dagger}(E_{ij})X) = \mathrm{Tr}(E_{ij}\Phi^C(X)) = \mathrm{Tr}(K_i^*K_jX)$$
for all $X$, so \begin{equation}\label{phicdag}\Phi^{C\dagger}(E_{ij}) = K_i^*K_j.\end{equation} 
This means that $\Phi^{C\dagger}(I) = \sum_{i=1}^p K_i^*K_i = I_n$ as long as $\Phi$ is trace-preserving; hence, the range of $\Phi^{C\dagger}$ is $*$-closed and contains the identity. We will come back to this point later. 

\subsection{Schur Product Channels}
\begin{defn} Given two matrices $A,B\in M_{n,m}(\C)$, their Schur product $A\circ B$ is the product of their individual entries:
$$(A\circ B)_{ij} = a_{ij}b_{ij}.$$
\end{defn}

It is easy to see that, for vectors $v,w$, $X\circ (vw^T) = D_vXD_w$ where $D_v$ denotes the diagonal matrix with $v$ down its diagonal. Hence, if $C\succeq 0$, so that $C = \sum_{i=1}^p v_iv_i^*$, $X\circ C = \sum_{i=1}^p D_{v_i}XD_{v_i}^*$. This is an operator-sum form for the map $\Phi(X)=X\circ C$, and so Schur product with some fixed $C\succeq 0$ is completely positive. Conversely, any $CP$ map with diagonal Kraus operators has this form. Such a map is called a Schur product map. \\
A Schur product map is trace-preserving if and only if 
$$\sum_{i=1}^n x_{ii}c_{ii}= \sum_{i=1}^p x_{ii}$$ for all $X\in M_n(\C)$, in which case $c_{ii}=1$ for all $1\leq i \leq n$. Thus, a trace-preserving Schur product map, or Schur product channel, has the form 
\begin{equation}\Phi(X)=X\circ C \end{equation} for some $C\succeq 0$ with $1$s down the diagonal.

\begin{defn} A positive-semidefinite matrix with $1$s down the diagonal is called a correlation matrix. We denote the set of $n\times n$ correlation matrices by $\mathcal{E}_n$, and call this set the $n$-dimensional elliptope. 
\end{defn}
A Schur product channel is automatically a unital channel as well, since if $c_{ii}=1$, $\Phi(X) = X\circ C$ satisfies $\Phi(I) = I\circ C = \mathrm{diag}(C)=I$.

\begin{remark}\label{eliptope} The set $\mathcal{E}_n$ is a closed convex set whose structure we will discuss more later; for now we note that $\mathcal{E}_n$ is the intersection of the convex cone of positive semidefinite $n\times n$ matrices with the linear subspace defined by $c_{ii}=1$, and so any extreme point of $S_{n}^+$ lying in this subspace is also an extreme point of $\mathcal{E}_n$. Since the extreme points of $S_{n}^+$ are rank-one matrices, any correlation matrix of the form $zz^*$ is an extreme point of $\mathcal{E}_n$. 
\end{remark}

Next, we discuss the structure of $\Phi^C$ and $\Phi^{C\dagger}$ for a Schur product channel $\Phi$.

\begin{lem} Let $C$ be an $n\times n$ correlation matrix of rank $p$. There exist unit vectors $\{w_i\}_{i=1}^n\subseteq \C^p$, with $\langle w_i,w_j\rangle = c_{ij}$ and any such set $\{w_i\}_{i=1}^n$ corresponds to a resolution for $C$ of the form $C=\sum_{i=1}^p v_iv_i^*$ and vice-versa by means of the identification $v_{ij} = \overline{w_{ji}}$ where $v_{ij}$ is the $j^{th}$ entry of the vector $v_i$ and $w_{ji}$ is the $i^{th}$ entry of the vector $w_j$.
\end{lem}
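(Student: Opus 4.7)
The plan is to view the two objects being matched up as two faces of the same PSD-factorization coin: a resolution $C = \sum_{i=1}^p v_i v_i^*$ is exactly a factorization $C = WW^*$ with $W\in M_{n,p}(\C)$ whose columns are the $v_i$, while a Gram system $\{w_i\}_{i=1}^n$ with $\langle w_i,w_j\rangle = c_{ij}$ is exactly a factorization $C = V^*V$ with $V\in M_{p,n}(\C)$ whose columns are the $w_i$. The bridge between the two pictures is simply $W = V^*$.

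For existence of the $\{w_i\}$, I would use the spectral decomposition of $C$: since $C\succeq 0$ has rank $p$, one can build $V \in M_{p,n}(\C)$ whose rows are $\sqrt{\lambda_k}$ times the $k$-th orthonormal eigenvector of $C$ (for the $p$ nonzero eigenvalues $\lambda_k$), giving $C = V^*V$. Defining $w_i$ to be the $i$-th column of $V$, one reads off
$$\langle w_i, w_j\rangle = w_i^*w_j = (V^*V)_{ij} = c_{ij},$$
and the hypothesis $c_{ii}=1$ forces $\|w_i\| = 1$.

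For the bijection, given any system $\{w_i\}_{i=1}^n$ as in the statement, form $V$ with columns $w_i$ and set $W = V^*$. Then $WW^* = V^*V = C$, so the columns $v_1,\ldots,v_p$ of $W$ give a resolution $C = \sum_{i=1}^p v_iv_i^*$, and unpacking indices yields
$$v_{ij} = W_{ji} = (V^*)_{ji} = \overline{V_{ij}} = \overline{w_{ji}},$$
which is the identification in the statement. Conversely, starting from a resolution $C = \sum_{i=1}^p v_iv_i^*$, form $W\in M_{n,p}(\C)$ with columns $v_i$ (so $WW^* = C$), set $V = W^*$, and read off its columns $w_i$: these satisfy $\langle w_i,w_j\rangle = (V^*V)_{ij}=(WW^*)_{ij}=c_{ij}$, hence are unit vectors and satisfy the same entrywise identity. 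The two constructions are mutually inverse since $(V^*)^* = V$.

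There is no real obstacle here beyond careful index bookkeeping; the content of the lemma is just the observation that the PSD factorizations $C = V^*V$ and $C = WW^*$ are in natural bijection via transposition, and that the rank-$p$ hypothesis on $C$ matches the number of vectors in both pictures.
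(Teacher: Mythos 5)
Your proof is correct and follows essentially the same route as the paper's: both arguments are pure factorization bookkeeping, identifying the Gram vectors $w_i$ with the (conjugated) rows of the matrix $V$ whose columns are the resolution vectors $v_i$, so that $C = VV^*$ on one side reads as $c_{ij} = \langle w_i, w_j\rangle$ on the other. Your explicit use of the spectral decomposition to produce a rank-$p$ factorization just makes precise the paper's ``without loss of generality take $k=p$'' step; otherwise the two proofs coincide.
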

\begin{proof} Let $v_i = \sum_{j=1}^n v_{ij}e_j$ be vectors in a resolution for $C$, so that $C = \sum_{i=1}^k v_iv_i^*$. We can without loss of generality take $k=p$. Define $w_i$ by $w_i = \sum_{j=1}^k \overline{v_{ji}}e_j$, so that if $V = \sum_{i=1}^k v_ie_i^*$ has $v_i$ as its columns, it has $w_i^*$ as its rows. Then $C = VV^*$, and so $c_{ij} = w_i^*w_j$. Since $c_{ii} = w_i^*w_i = \|w_i\|^2=1$ each $w_i$ must be a unit vector.\\
The converse proceeds in the opposition direction: if $w_i^*w_j = c_{ij}$, let $V$ be the matrix whose rows are $w_i^*$; then $C = VV^*$. Then the columns of $V$, $v_i$, satisfy $C = \sum_{i=1}^k v_iv_i^*$. 
\end{proof}

\begin{prop} Let $C\in \mathcal{E}_n$, and $\Phi(X) = X\circ C$. Let $\{w_i\}_{i=1}^n$ in $\C^p$ be a set of unit vectors satisfying $w_i^*w_j = c_{ij}$. Let $\{v_i\}_{i=1}^p$ in $\C^n$ be the corresponding resolution of $C$, $C = \sum_{i=1}^p v_iv_i^*$. Then 
\begin{equation} \Phi^C(X) = \sum_{i=1}^n x_{ii}(w_iw_i^*)^T \end{equation}
and 
\begin{equation}\Phi^{C\dagger}(X) = \sum_{i=1}^n (w_i^TX\overline{w_i})E_{ii}.\end{equation}
\end{prop}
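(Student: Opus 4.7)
The proof is essentially an index-tracking exercise: unwind the definitions of $\Phi^C$ and $\Phi^{C\dagger}$ in terms of Kraus operators, and translate the resulting expressions using the correspondence $v_{ij} = \overline{w_{ji}}$ from the preceding lemma.

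First, I would observe that since $C = \sum_{i=1}^p v_iv_i^*$, the channel $\Phi(X)=X\circ C$ has Kraus operators $K_i = D_{v_i}$ (diagonal matrices with $v_i$ down the diagonal). Then $K_i^* K_j$ is again diagonal, with $k$-th diagonal entry $\overline{v_{ik}} v_{jk}$, so
\begin{equation*}
\mathrm{Tr}(K_i^*K_j X) = \sum_{k=1}^n \overline{v_{ik}} v_{jk}\, x_{kk}.
\end{equation*}
Substituting into the definition \eqref{complement} of the complementary channel and swapping the order of summation gives
\begin{equation*}
\Phi^C(X) = \sum_{i,j=1}^p \mathrm{Tr}(K_i^*K_j X)\, E_{ji} = \sum_{k=1}^n x_{kk} \sum_{i,j=1}^p \overline{v_{ik}} v_{jk}\, E_{ji}.
\end{equation*}

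Next, I would apply the lemma's identification $v_{ij} = \overline{w_{ji}}$, which gives $\overline{v_{ik}} = w_{ki}$ and $v_{jk} = \overline{w_{kj}}$. Therefore the inner sum is
\begin{equation*}
\sum_{i,j=1}^p w_{ki} \overline{w_{kj}}\, E_{ji} = \Bigl(\sum_{i,j=1}^p w_{ki} \overline{w_{kj}}\, E_{ij}\Bigr)^T = (w_k w_k^*)^T,
\end{equation*}
since $w_kw_k^* = \sum_{i,j} w_{ki}\overline{w_{kj}} E_{ij}$. This yields the first claimed formula.

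For the adjoint, I would extend the formula \eqref{phicdag}, $\Phi^{C\dagger}(E_{ij}) = K_i^*K_j$, linearly to arbitrary $Y \in M_p(\C)$ to get $\Phi^{C\dagger}(Y) = \sum_{i,j=1}^p y_{ij}\, K_i^*K_j$, which is diagonal with $k$-th entry $\sum_{i,j=1}^p y_{ij} \overline{v_{ik}} v_{jk}$. Applying the same identification $\overline{v_{ik}} = w_{ki}$, $v_{jk} = \overline{w_{kj}}$, this entry becomes $\sum_{i,j=1}^p w_{ki}\, y_{ij}\, \overline{w_{kj}} = w_k^T Y \overline{w_k}$, which is exactly the claimed expression. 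The only real bookkeeping hurdle is being consistent about which index of $v_i$ or $w_j$ ranges over $\{1,\dots,n\}$ and which over $\{1,\dots,p\}$, and being careful with the transpose that appears because the indices of $E_{ji}$ in \eqref{complement} are reversed relative to the natural presentation of $w_kw_k^*$.
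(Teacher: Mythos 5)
Your proof is correct and takes essentially the same approach as the paper: both work from the Kraus operators $K_i = D_{v_i}$, compute $\mathrm{Tr}(K_i^*K_jX)$, and translate via the identification $v_{ij} = \overline{w_{ji}}$ to produce $(w_kw_k^*)^T$. The only cosmetic difference is in the adjoint formula, where the paper reads off Kraus operators $W_i = \overline{w_i}e_i^*$ for $\Phi^C$ and conjugates them, while you extend $\Phi^{C\dagger}(E_{ij}) = K_i^*K_j$ by linearity; these are equivalent index computations.
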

\begin{proof} Choose $\{D_{v_i}\}_{i=1}^p$ to be a set of Kraus operators for the channel $\Phi$. Then $\Phi^C(X) = \sum_{i,j=1}^p \mathrm{Tr}(D_{v_j}^*D_{v_i}X)E_{ij}$. If $X=E_{kl}$ for $k\neq l$, all terms vanish since $D_{v_j}^*D_{v_i}$ is diagonal. If $X=E_{kk}$, we get
\begin{align*} \sum_{i,j=1}^p \mathrm{Tr}(D_{v_j}^*D_{v_i}E_{kk})E_{ij} & = \sum_{i,j=1}^p \overline{v_{jk}}v_{ik} E_{ij} \\
& = \sum_{i,j=1}^p w_{kj}\overline{w_{ki}}E_{ij}\\
& = \bigl(\sum_{i=1}^p \overline{w_{ki}}e_i\bigr)\bigl(\sum_{j=1}^p w_{kj}e_j^*\bigr)\\
& = \overline{w_k}w_k^T\\
& = (w_kw_k^*)^T.\end{align*}

From this, it's easy to see that a choice of Kraus operators for $\Phi^C$ is $\{W_i\}_{i=1}^n$, where $W_i = \overline{w_i}e_i^*$. Then $\{W_i^*\}_{i=1}^n$ is a set of Kraus operators for $\Phi^{C\dagger}$, and so 
\begin{align*} \Phi^{C\dagger}(X)& = \sum_{i=1}^n W_i^*XW_i \\
& = \sum_{i=1}^n (e_iw_i^T)X(\overline{w_i}e_i^*)\\
& = \sum_{i=1}^n (w_i^TXw_i)E_{ii}.\end{align*}
\end{proof}

\section{Factorizable Channels}
\subsection{Factorizations}
Factorizations of quantum channels have been studied in both quantum information and purely for their own mathematical interest. Factorizations have been considered in the context of dilations \cite{anantharaman}, and in the study of the so-called Asymptotic Quantum Birkhoff Conjecture \cite{haagerup}. In \cite{haagerup}, Haagerup and Musat found a number of equivalent conditions for a channel to admit a factorization; the definition we give below is just one of their equivalent conditions, and not the one in terms of which factorizations were initially defined, but it is the one most useful for our purposes.

\begin{defn}\label{factdefn} Let $\Phi:M_n(\C)\rightarrow M_n(\C)$ be a quantum channel with Kraus operators $\{K_i\}_{i=1}^p$. $\Phi$ is said to be factorizable if either of the following equivalent conditions hold:
\begin{enumerate}
\item There exists a finite, tracial von Neumann algebra $\mathcal{N}$, and $\{V_i\}_{i=1}^p$, $V_i \in \mathcal{N}$, $\mathrm{Tr}_{\mathcal{N}}(V_i^*V_j)=\delta_{ij}$ such that
$$U = \sum_{i=1}^p K_i\otimes V_i \in M_{n}(\C)\otimes \mathcal{N}$$ is unitary.
\item There exists a finite tracial von Neumann algebra $\mathcal{N}$ and a unitary $U \in M_n(\C)\otimes \mathcal{N}$ such that
$$\Phi(X) = (\mathrm{id}\otimes \mathrm{Tr}_{\mathcal{N}})\bigl(U(X\otimes I_{\mathcal{N}})U^*\bigr).$$
\end{enumerate}
We will often take a short cut and refer to the algebra $\mathcal{N}$ itself, or even a particular choice of $\{V_i\}_{i=1}^p$ as a factorization for $\Phi$. 
A channel is matrix factorizable if the algebra $\mathcal{N}$ is a matrix algebra; i.e., if there exist positive integers $\{i_k\}_{k=1}^M$ such that $\mathcal{N} = \bigoplus_{k=1}^M M_{i_k}(\C)$.
\end{defn}
The equivalence to the two conditions in the above Definition is proved in Theorem $2.2$ of \cite{haagerup}, along with their equivalence to a third condition from which factorizability takes its name; but it is the two above conditions that are most useful to us.\\ 
We begin by noting another equivalent condition for the existence of a factorization.
\begin{prop}\label{factbycomplement} Let $\Phi:M_n(\C)\rightarrow M_n(\C)$ be a quantum channel. Then $\Phi$ is factorizable if and only if there exists a finite, tracial von Neumann algebra $\mathcal{N}$, and $\{V_i\}_{i=1}^p$, $V_i \in \mathcal{N}$, $\mathrm{Tr}_{\mathcal{N}}(V_i^*V_j)=\delta_{ij}$ such that for all $A \in \mathrm{range}(\Phi^C)$, 
$$\sum_{i,j=1}^p a_{ij}V_j^*V_i = \mathrm{Tr}(A)I_{\mathcal{N}}.$$
\end{prop}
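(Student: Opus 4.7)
The plan is to interpret the condition in the proposition as an unpacking of the isometry relation $U^*U = I_n \otimes I_\mathcal{N}$, where $U = \sum_{i=1}^p K_i \otimes V_i$ is the operator appearing in Definition \ref{factdefn}(1), and then to upgrade this isometry to a unitary using that $\mathcal{N}$ is a finite von Neumann algebra. So the whole proof is essentially a dictionary between the two sides, plus one invocation of a standard fact about finite von Neumann algebras.

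Concretely, I would first expand
\[U^*U = \sum_{i,j=1}^p K_j^*K_i \otimes V_j^*V_i\]
and apply $\mathrm{Tr}\otimes \mathrm{id}$ against $X \otimes I_\mathcal{N}$ to get
\[(\mathrm{Tr}\otimes\mathrm{id})\bigl(U^*U(X\otimes I_\mathcal{N})\bigr) = \sum_{i,j}\mathrm{Tr}(K_j^*K_iX)\,V_j^*V_i.\]
From equation \eqref{complement}, the $(i,j)$-entry of $A := \Phi^C(X)$ is exactly $a_{ij} = \mathrm{Tr}(K_j^*K_iX)$, so the right-hand side is $\sum_{i,j}a_{ij}V_j^*V_i$. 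Since $\Phi^C$ is trace-preserving (because $\Phi$ is), $\mathrm{Tr}(A) = \mathrm{Tr}(X)$, and trivially $(\mathrm{Tr}\otimes\mathrm{id})((I_n\otimes I_\mathcal{N})(X\otimes I_\mathcal{N})) = \mathrm{Tr}(X)I_\mathcal{N}$. Thus the proposition's condition, imposed for every $A \in \mathrm{range}(\Phi^C)$, is equivalent to
\[(\mathrm{Tr}\otimes\mathrm{id})\bigl((U^*U - I_n\otimes I_\mathcal{N})(X\otimes I_\mathcal{N})\bigr) = 0 \quad \text{for all } X \in M_n(\C).\]
Writing $U^*U - I_n\otimes I_\mathcal{N} = \sum_{k,l}E_{kl}\otimes B_{kl}$ and plugging $X = E_{pq}$ into the display above produces $B_{qp}$ on the left, forcing each $B_{qp} = 0$, i.e., $U^*U = I_n \otimes I_\mathcal{N}$.

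The only subtlety is that Definition \ref{factdefn}(1) asks for $U$ to be unitary, not merely an isometry. Here I invoke that $M_n(\C)\otimes \mathcal{N}$ is a finite von Neumann algebra when equipped with the product trace $\tfrac{1}{n}\mathrm{Tr}\otimes \mathrm{Tr}_\mathcal{N}$, and in any finite von Neumann algebra every isometry is automatically a unitary. Hence $U^*U = I$ forces $UU^* = I$. The converse direction is just the same chain of computations read backwards from $U$ unitary to the condition on $\mathrm{range}(\Phi^C)$. The main step at which one must pause is this promotion of isometry to unitary, which leans on the finiteness of $\mathcal{N}$ in an essential way; the rest is bookkeeping with the Kraus data and partial traces.
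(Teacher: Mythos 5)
Your proposal is correct and takes essentially the same route as the paper's proof: both amount to unpacking $U^*U = I_n\otimes I_{\mathcal{N}}$ block-by-block and matching the coefficients $\mathrm{Tr}(K_j^*K_iX)$ against the entries of $\Phi^C(X)$, the only cosmetic difference being that you test against $X\otimes I_{\mathcal{N}}$ under a partial trace while the paper compares $E_{kl}$-blocks via the identity $K_i^*K_j = \Phi^{C\dagger}(E_{ij})$. Your explicit promotion of the isometry $U$ to a unitary using finiteness of $M_n(\C)\otimes\mathcal{N}$ is a worthwhile addition, since the paper's converse stops at $U^*U = I_n\otimes I_{\mathcal{N}}$ and leaves that step implicit.
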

\begin{proof}
First, suppose $\Phi$ is factorizable, with $\{V_i\}_{i=1}^p$ forming a factorization. If $\{K_i\}_{i=1}^p$ is a set of Kraus operators for $\Phi$, we must have that  
$$U^*U = \sum_{i,j=1}^p K_i^*K_j \otimes V_i^*V_j = I_n\otimes I_{\mathcal{N}}.$$

$K_i^*K_j \otimes V_i^*V_j = \sum_{k,l=1}^n (K_i^*K_j)_{kl}E_{kl}\otimes V_i^*V_j$, and so we must have
$$\sum_{k,l=1}^n E_{kl}\otimes\bigl(\sum_{i,j=1}^p (K_i^*K_j)_{kl}V_i^*V_j\bigr) = I_n\otimes I_{\mathcal{N}}$$ from which we see that
$\sum_{i,j=1}^p (K_i^*K_j)_{kl}V_i^*V_j = \delta_{kl}I_{\mathcal{N}}$.
Next, recall from Equation \ref{phicdag} that $K_i^*K_j = \Phi^{C\dagger}(E_{ij})$, and so 
$$(K_i^*K_j)_{kl}= \mathrm{Tr}(\Phi^{C\dagger}(E_{ij})E_{lk}) = \mathrm{Tr}(E_{ij}\Phi^C(E_{lk})) = \Phi^C(E_{lk})_{ji}.$$

Hence $$\sum_{i,j=1}^p \Phi^C(E_{kl})_{ji}V_i^*V_j = \delta_{kl}I_{\mathcal{N}} = \mathrm{Tr}(E_{kl})I_{\mathcal{N}}$$ and extending by linearity we have one direction.

Conversely, suppose $\{V_i\}_{i=1}^p$ with the stated properties exist. Then, applying the condition to each of $\Phi^C(E_{kl})$ in turn, and reversing the steps above, we see that if $U = \sum_{i=1}^p K_i\otimes V_i$, $U^*U = I_n\otimes I_{\mathcal{N}}$. 
\end{proof}

\begin{remark}From Proposition \ref{factbycomplement} we infer that factorizability of a channel depends only on the structure of the range of its complement; we have already noted that this is only defined up to adjunction by an isometry.\\
We can equally well phrase this observation in terms of the kernel of $\Phi^{C\dagger}$, since the range of $\Phi^C$ and the kernel of $\Phi^{C\dagger}$ are orthogonal complements in $M_p(\C)$: factorizability of $\Phi$ depends only on the structure of $\mathrm{kernel}(\Phi^{C\dagger})$.
\end{remark}

\begin{prop}\label{factisconv} Let $\Phi_1$, $\Phi_2$ be two factorizable quantum channels, represented by sets of Kraus operators $\{K_i\}_{i=1}^p$ and $\{L_i\}_{i=1}^q$. Suppose that $\mathcal{N}_1$, $\mathcal{N}_2$ are algebras that act as factorizations for the respective channels.\\
Then the convex combination $\Phi=t\Phi_1 + (1-t)\Phi_2$ for $t\in \left(0,1\right)$ is factorizable by means of the algebra $\mathcal{N}=\mathcal{N}_1\oplus\mathcal{N}_2$ with trace $\mathrm{Tr}_{\mathcal{N}}(A\oplus B) = t\mathrm{Tr}_{\mathcal{N}_1}(A)+(1-t)\mathrm{Tr}_{\mathcal{N}_2}(B)$.
\end{prop}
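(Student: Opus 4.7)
My plan is to use the second equivalent condition in Definition~\ref{factdefn}, which expresses factorizability in terms of a unitary dilation, as this fits most naturally with a direct sum decomposition. By hypothesis, there are unitaries $U_1 \in M_n(\C) \otimes \mathcal{N}_1$ and $U_2 \in M_n(\C) \otimes \mathcal{N}_2$ such that $\Phi_k(X) = (\mathrm{id}\otimes \mathrm{Tr}_{\mathcal{N}_k})(U_k(X\otimes I_{\mathcal{N}_k})U_k^*)$ for $k=1,2$. Concretely, $U_1 = \sum_{i=1}^p K_i\otimes V_i$ for some $\{V_i\}\subset \mathcal{N}_1$ and $U_2 = \sum_{j=1}^q L_j\otimes W_j$ for some $\{W_j\}\subset \mathcal{N}_2$.

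Next I would verify that $\mathcal{N} := \mathcal{N}_1 \oplus \mathcal{N}_2$ endowed with the trace $\mathrm{Tr}_{\mathcal{N}}(A\oplus B) = t\,\mathrm{Tr}_{\mathcal{N}_1}(A) + (1-t)\,\mathrm{Tr}_{\mathcal{N}_2}(B)$ is a finite tracial von Neumann algebra. Linearity and the trace property $\mathrm{Tr}_{\mathcal{N}}(XY) = \mathrm{Tr}_{\mathcal{N}}(YX)$ follow from the summand-wise traces; faithfulness and positivity are preserved since $t,(1-t)\in(0,1)$; and $\mathrm{Tr}_{\mathcal{N}}(I_{\mathcal{N}}) = t+(1-t) = 1$, so this trace is a finite (normalized) faithful trace.

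Then I would use the canonical identification $M_n(\C)\otimes\mathcal{N} \cong (M_n(\C)\otimes \mathcal{N}_1)\oplus (M_n(\C)\otimes \mathcal{N}_2)$ and define the block-diagonal element
\begin{equation*}
U := U_1 \oplus U_2 = \sum_{i=1}^p K_i \otimes (V_i\oplus 0) + \sum_{j=1}^q L_j \otimes (0\oplus W_j),
\end{equation*}
which is unitary in $M_n(\C)\otimes\mathcal{N}$ because each of $U_1,U_2$ is unitary in its respective factor. Since $I_{\mathcal{N}} = I_{\mathcal{N}_1}\oplus I_{\mathcal{N}_2}$, the element $X\otimes I_{\mathcal{N}}$ corresponds to $(X\otimes I_{\mathcal{N}_1})\oplus (X\otimes I_{\mathcal{N}_2})$, so $U(X\otimes I_{\mathcal{N}})U^*$ is block-diagonal with blocks $U_k(X\otimes I_{\mathcal{N}_k})U_k^*$.

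Finally, applying $\mathrm{id}\otimes \mathrm{Tr}_{\mathcal{N}}$ distributes the weights across the two blocks:
\begin{equation*}
(\mathrm{id}\otimes \mathrm{Tr}_{\mathcal{N}})(U(X\otimes I_{\mathcal{N}})U^*) = t\,\Phi_1(X) + (1-t)\,\Phi_2(X) = \Phi(X),
\end{equation*}
establishing the desired factorization. There is no genuine obstacle here; the main point requiring care is calibrating the trace on $\mathcal{N}$ so that its weights match the convex combination weights, which is why the trace is defined with $t$ and $(1-t)$ rather than as the ordinary direct sum trace.
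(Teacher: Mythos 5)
Your proof is correct, but it takes a different route from the paper's. You verify condition (2) of Definition~\ref{factdefn} directly: you form the block-diagonal unitary $U = U_1\oplus U_2$ in $M_n(\C)\otimes\mathcal{N}\cong (M_n(\C)\otimes\mathcal{N}_1)\oplus(M_n(\C)\otimes\mathcal{N}_2)$ and observe that the weighted trace $\mathrm{Tr}_{\mathcal{N}} = t\,\mathrm{Tr}_{\mathcal{N}_1}\oplus(1-t)\,\mathrm{Tr}_{\mathcal{N}_2}$ turns the dilation formula into exactly the convex combination $t\Phi_1+(1-t)\Phi_2$. The paper instead works with condition (1): it takes the Kraus operators $\{\sqrt{t}K_i\}\cup\{\sqrt{1-t}L_j\}$ of $\Phi$, builds the rescaled elements $X_i = (t^{-1/2}\widehat{V_i})\oplus 0$ and $0\oplus((1-t)^{-1/2}\widehat{W_j})$, checks they are trace-orthonormal under $\mathrm{Tr}_{\mathcal{N}}$, and verifies the complement-range criterion of Proposition~\ref{factbycomplement}. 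The two arguments build the same unitary (the $\sqrt{t}$ factors on the Kraus operators cancel against the $t^{-1/2}$ factors on the algebra elements, giving precisely your $U_1\oplus U_2$), but they check different equivalent conditions. Your version is more elementary and self-contained, needing nothing about complement channels; the paper's version buys an explicit trace-orthonormal family $\{X_i\}$, which is the format its converse (Theorem~\ref{main}) consumes, and it exercises Proposition~\ref{factbycomplement}, the workhorse for the rest of the paper. One small point of care in your write-up: you state $U_1 = \sum_i K_i\otimes V_i$, which is the condition-(1) form; your argument never actually needs this, since any unitaries $U_1, U_2$ implementing the dilations suffice, and you also implicitly use that $\mathrm{Tr}_{\mathcal{N}_1}$, $\mathrm{Tr}_{\mathcal{N}_2}$ are normalized ($\mathrm{Tr}_{\mathcal{N}_k}(I)=1$), which is what makes $\mathrm{Tr}_{\mathcal{N}}(I_{\mathcal{N}})=1$ and is the standing convention for tracial factorizations.
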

\begin{proof} Suppose $\{\widehat{V_i}\}_{i=1}^p$ and $\{\widehat{W_i}\}_{i=1}^q$ are operators in $\mathcal{N}_1$, $\mathcal{N}_2$ by which the two channels are respectively factorized. Let $\{X_i\}_{i=1}^{p+q}$ be given by $X_i = \left[(\sqrt{t}^{-1}\widehat{V_i})\oplus 0\right] \in \mathcal{N}$ if $1\leq i \leq p$, and $X_i = \left[0\oplus (\sqrt{1-t}^{-1}\widehat{W_{i-p}})\right]$ if $p+1\leq i \leq p+q$. Notice that $X_i^*X_j = 0$ if $1\leq i\leq p$ and $p+1\leq j \leq p+q$ or vice-versa, so such pairs are automatically orthogonal. Otherwise, $\mathrm{Tr}_{\mathcal{N}}(X_i^*X_j) = t\mathrm{Tr}(t^{-1}\widehat{V_i}^*\widehat{V_j}) = \delta_{ij}$ or $(t-1)\mathrm{Tr}((t-1)^{-1}\widehat{W_{i-p}}^*\widehat{W_{j-p}}) = \delta_{ij}$. \\
Clearly, a set of Kraus operators for $\Phi$ is $\{\sqrt{t}K_i\}_{i=1}^p \cup \{\sqrt{1-t}L_i\}_{i=1}^q$. Then \begin{align*}\Phi^C(X) &= t\sum_{i,j=1}^p \mathrm{Tr}(K_j^*K_iX)E_{ij} + \sqrt{t(1-t)}\sum_{i,j=1}^{p,q}\mathrm{Tr}(K_i^*L_jX)E_{p+j,i}\\
& + \sqrt{t(1-t)}\sum_{i,j=1}^{q,p} \mathrm{Tr}(L_i^*K_jX)E_{j,p+i} + (1-t)\sum_{i,j=1}^{q}\mathrm{Tr}(L_i^*L_j X)E_{p+j,p+i}.\end{align*}

For any $X$, take $\sum_{i,j=1}^{p+q}\Phi^C(X)_{ji}X_i^*X_j$. We have already noticed that terms where $i,j$ are not both between either $1$ and $p$ or $p+1$ and $p+q$ disappear, so we get only
\begin{align*}&t \sum_{i,j=1}^p \mathrm{Tr}(K_j^*K_iX)\bigl[(t^{-1}\widehat{V_j}^*\widehat{V_i})\oplus 0\bigr] + (1-t)\sum_{i,j=1}^q \mathrm{Tr}(L_j^*L_iX)\bigl[ 0\oplus((1-t)^{-1}\widehat{W_j}^*\widehat{W_i})\bigr]  =\\
& \bigl(\sum_{i,j=1}^p \Phi_1^C(X)_{ij}\widehat{V_j}^*\widehat{V_i}\bigr)\oplus\bigl(\sum_{i,j=1}^q \Phi_2^C(X)_{ij} \widehat{W_j}^*\widehat{W_i}\bigr) \\
& = I_{\mathcal{N}_1}\oplus I_{\mathcal{N}_2} \\
& = I_{\mathcal{N}}.\end{align*}

\end{proof}
\begin{remark} The above proposition proves not only that the set of factorizable quantum channels is convex, but that the among the factorizations of a channel $\Phi$ that is a convex combination of other factorizable channels $\{\Phi_i\}$, are factorizations it inherits from the factorizations associated with the $\{\Phi_i\}$, with convex combination of channels becoming direct sum of factorizations. Our main theorem is a converse to this proposition.
\end{remark}
  
\section{Factorizations and Linear Matrix Inequalities}
\begin{defn} Let $Z = (Z_1,\cdots,Z_d)\in S_n^d$. The linear matrix inequality defined by $Z$, $L_Z(A)$, is the family of inequalities for $A = (A_1,A_2,\cdots, A_d)$
\begin{equation}\label{LMI} L_Z(A) := I\otimes I_N + \sum_{i=1}^d Z_i\otimes A_i\succeq 0\end{equation} for $A_i \in M_N(\C)$. 
\end{defn}

In a sense, a linear matrix inequality, or LMI, is an inequality where we allow matrix solutions of arbitrary size; we do not assume we know the preferred dimension of our solution matrices ahead of time. We group solutions into sets by size: 
\begin{equation}\label{Dk}\mathcal{D}_Z(k):=\{A \in S_k^d: L_Z(A)\succeq 0\}\end{equation} and then collect all the solutions up together in the set 
\begin{equation}\label{D}\mathcal{D}_Z:=\bigcup_{k=1}^{\infty} \mathcal{D}_Z(k).\end{equation}

The set $\mathcal{D}_Z(1) = \{x\in \R^d: L_Z(x) \succeq 0\}$ is sometimes called a spectrahedron, and Helton, Kelp, and McCullough \cite{klep} call the set $\mathcal{D}_Z$ the free spectrahedron. Solutions sets of a linear matrix inequality form a matrix convex set: a set 
$$\mathcal{K} = \cup_{i=1}^{\infty} \mathcal{K}(i)$$ where each $\mathcal{K}(i) \subseteq S_i^d$ satisfies the following conditions:
\begin{enumerate}
\item\label{graded} It is closed under direct sums: if $(A_1,\cdots, A_d) = A \in \mathcal{K}(i)$ and $(B_1,\cdots,B_d) = B\in \mathcal{K}(j)$, then $$A\oplus B :=(A_1\oplus B_1,\cdots, A_d \oplus B_d) \in \mathcal{K}(i+j).$$
\item\label{free} It is closed under unitary equivalence: if $A \in \mathcal{K}(i)$, and $U \in M_i(\C)$ is unitary, then $$UAU^* = (UA_1U^*,\cdots, UA_dU^*) \in \mathcal{K}(i).$$
\item\label{convex} It is closed under isometric adjunction: if $A \in \mathcal{K}(i)$, and $V: \C^j\rightarrow \C^i$ is an isometry, then 
$$V^*AV = (V^*A_1V, \cdots, V^*A_dV) \in \mathcal{K}(j).$$
\end{enumerate}
A set satisfying satisfying just the first is a graded set; just the second is free. The first and third properties alone are enough to characterize matrix convex sets. It is easy to verify that the solutions to an LMI form a matrix convex set. The set of solutions to an LMI also satisfies an extra property: if $A \in \mathcal{D}_Z(i+j)$ is reducible, so that the $A_i$ can be simultaneously reduced to $A_i \simeq B_i \oplus C_i$, for $B \in S_i^d$ and $C\in S_j^d$, then $A \in \mathcal{D}_Z(i)$ and $B \in \mathcal{D}_Z(j)$. \\

Linear matrix inequalities have been studied in the context of quantum information in particular because of their use in determining when certain interpolation problems can be satisfied by means of unital CP maps \cite{davidson}.  

In this section, we will show that finding matrix factorizations for a quantum channel $\Phi$ is equivalent to finding certain solutions to an LMI defined by $\Phi$.

\begin{thm} Let $\{Z_i\}_{i=1}^d$ be a self-adjoint basis for the orthogonal complement to the range of $\Phi^C$. Then there exists $k$ and $\{V_i\}_{i=1}^p \in M_k(\C)$ such that $U = \sum_{i=1}^p K_i\otimes V_i$ is unitary if and only if there exists $A \in S_k^d$ such that
$$L_Z(A) \succeq 0$$ and $\mathrm{rank}(L_Z(A)) \leq k$. 
The matrices $\{V_i\}$ are orthonormal with respect to the Hilbert-Schmidt inner product if and only if $$\R^d \ni(\mathrm{Tr}(A_1),\cdots,\mathrm{Tr}(A_d)) =:\mathrm{Tr}(A)=0.$$
\end{thm}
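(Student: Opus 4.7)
My plan is to rephrase the unitarity of $U = \sum_{i=1}^p K_i \otimes V_i$ as a linear constraint on a PSD block matrix in $M_p(\C) \otimes M_k(\C)$, then recognize that constraint as the LMI $L_Z(A)$ with a rank bound. First, given $V_1, \ldots, V_p \in M_k(\C)$, form the PSD Gram-type matrix
$$X := \sum_{i,j=1}^p E_{ij} \otimes V_i^* V_j \in M_p(\C) \otimes M_k(\C).$$
Using $\Phi^{C\dagger}(E_{ij}) = K_i^* K_j$ from Equation \ref{phicdag}, a direct computation gives $U^* U = (\Phi^{C\dagger} \otimes \mathrm{id})(X)$. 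Since $U \in M_n(\C) \otimes M_k(\C)$ is square, $U$ is unitary if and only if $(\Phi^{C\dagger} \otimes \mathrm{id})(X) = I_n \otimes I_k$.

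I would then parametrize the Hermitian solutions of this linear equation. Trace-preservation of $\Phi$ gives $\Phi^{C\dagger}(I_p) = I_n$, so $I_p \otimes I_k$ is a particular solution; meanwhile $\ker(\Phi^{C\dagger} \otimes \mathrm{id}) = \ker(\Phi^{C\dagger}) \otimes M_k(\C) = \mathrm{span}\{Z_i\} \otimes M_k(\C)$, since $\ker(\Phi^{C\dagger}) = \mathrm{range}(\Phi^C)^{\perp}$. Every Hermitian solution therefore has the form $X = I_p \otimes I_k + \sum_{i=1}^d Z_i \otimes A_i = L_Z(A)$ with $A = (A_1, \ldots, A_d) \in S_k^d$ (self-adjointness of the $A_i$ follows from that of $X$ together with the self-adjointness and linear independence of the $Z_i$).

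The remaining ingredient is a rank lemma: a PSD matrix $X \in M_p(\C) \otimes M_k(\C)$ has the form $\sum E_{ij} \otimes V_i^* V_j$ for some $V_i \in M_k(\C)$ if and only if $\mathrm{rank}(X) \leq k$. For the forward direction, writing $V_i^* V_j = \sum_{\gamma=1}^k (V_i^* e_\gamma)(V_j^* e_\gamma)^*$ exhibits $X = \sum_{\gamma=1}^k w_\gamma w_\gamma^*$ with $w_\gamma := \sum_i e_i \otimes V_i^* e_\gamma$. Conversely, a rank-$r \leq k$ resolution $X = \sum_{\gamma=1}^r w_\gamma w_\gamma^*$ with $w_\gamma = \sum_i e_i \otimes u_{i,\gamma}$ produces matrices $V_i \in M_k(\C)$ by defining $V_i^* e_\gamma := u_{i,\gamma}$ and padding the remaining $k - r$ columns of $V_i^*$ with zeros. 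Combining the three steps yields the main equivalence.

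For the orthonormality clause, apply $\mathrm{id} \otimes \mathrm{Tr}$ to $X$: Hilbert-Schmidt orthonormality of the $V_i$ (normalized so that $\mathrm{Tr}(V_i^* V_j) = k \delta_{ij}$, consistent with the canonical tracial state on $M_k(\C)$) gives $(\mathrm{id} \otimes \mathrm{Tr})(X) = k I_p$. On the other hand $(\mathrm{id} \otimes \mathrm{Tr})(L_Z(A)) = k I_p + \sum_{i=1}^d \mathrm{Tr}(A_i) Z_i$, so the two agree iff $\sum_i \mathrm{Tr}(A_i) Z_i = 0$; by linear independence of the $Z_i$ this is exactly $\mathrm{Tr}(A) = 0$. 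The main obstacle is the rank lemma, which demands careful bookkeeping between low-rank vector resolutions of $X$ and their matrix representatives in $M_k(\C)$; everything else is a clean linear-algebraic translation.
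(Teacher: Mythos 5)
Your proposal is correct and is essentially the paper's own argument: you identify $U^*U$ with the block Gram matrix $\sum_{i,j}E_{ij}\otimes V_i^*V_j$, characterize unitarity by the affine condition $X\in I_p\otimes I_k+\ker(\Phi^{C\dagger})\otimes M_k(\C)$ (the paper reaches exactly this via Proposition~\ref{factbycomplement}, i.e.\ by pairing against $\mathrm{range}(\Phi^C)$, which is the same computation $U^*U=(\Phi^{C\dagger}\otimes\mathrm{id})(X)$ viewed dually), and your rank lemma is precisely the paper's step of factoring $L_Z(A)=V^*V$ with $V$ padded to $k$ rows and sliced into $k\times k$ blocks $V_i$. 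If anything, your treatment of the orthonormality clause is more careful than the paper's, which writes $(\mathrm{id}\otimes\mathrm{Tr})(L_Z(A))=I_p+\sum_i\mathrm{Tr}(A_i)Z_i$ and thereby silently drops the factor $\mathrm{Tr}(I_k)=k$; your normalization $\mathrm{Tr}(V_i^*V_j)=k\delta_{ij}$ (orthonormality with respect to the tracial state, consistent with Definition~\ref{factdefn}) is the reading under which the stated equivalence with $\mathrm{Tr}(A)=0$ actually holds for $k>1$.
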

\begin{proof} Suppose such an $A$ exists: then $L_Z(A) = I_p\otimes I_k + \sum_{i=1}^d Z_i \otimes A_i \succeq 0$ can be factored as $L_Z(A) = V^*V$. The rank of $L_Z(A)$ is $r\leq k$, so we can choose $V$ to have $k$ rows by appending $k-r$ rows of $0$s to $V$. Divide $V$ into $p$ blocks of size $k\times k$, which we call $V_i$, so that $V = \sum_i e_i^T \otimes V_i$; then $V^*V = \sum_{i,j} E_{ij}\otimes V_i^*V_j$; that is, it is a block matrix whose blocks are $V_i^*V_j$ for $V_i \in M_k(\C)$. \\
Let $X \in \mathrm{range}(\Phi^C)$, so that $X \perp Z_i$ for each $i$.\\
Then 
$$(\mathrm{Tr}\otimes \mathrm{id})\bigl( (X\otimes I_k)L_Z(A) \bigr) = \mathrm{Tr}(X)I_k + \sum_{i=1}^d \mathrm{Tr}(XZ_i)A_i  = \mathrm{Tr}(X)I_k$$
but recalling that $L_Z(A) = \sum_{i,j}E_{ij}\otimes V_i^*V_j$ we see that 
$$(\mathrm{Tr}\otimes \mathrm{id})\bigl( (X\otimes I_k)L_Z(A)\bigr) = \sum_{i,j} X_{ji} V_i^*V_j = \mathrm{Tr}(X)I_k$$ and so the matrices $\{V_i\}$ satisfy Proposition \ref{factbycomplement}, and $U = \sum_{i=1}^p K_i \otimes V_i$ is unitary.\\
For the converse, let $\{V_i\}_{i=1}^p$ be $k\times k$ matrices satisfying $U=\sum_{i=1}^p K_i \otimes V_i$ is unitary. Let $V^*V =  E_{ij}\otimes V_i^*V_j$; clearly this matrix is positive semidefinite, has rank less than or equal to $k$, and by Proposition \ref{factbycomplement} it must satisfy
$$(\mathrm{Tr}\otimes \mathrm{id})\bigl((X\otimes I)V^*V\bigr) = \mathrm{Tr}(X)I_k$$ for each $X \perp \ker (\Phi^{C\dagger})$. If we express $V^*V = \sum_i B_i \otimes A_i \in M_p(\C)\otimes M_k(\C)$, we see that we can choose $B_1 = I_p$, $A_1 = I_k$; then $\mathrm{Tr}(XB_i) = 0$ for each $B_i$, and $B_i \in \ker (\Phi^{C\dagger})$, so we can take $B_i = Z_i$. Hence $V^*V = L_Z(A)$ for some $A$, and so $A \in \mathcal{D}_Z(k)$.\\
Finally, $\{V_i\}$ are trace-orthonormal if and only if $(\mathrm{id}\otimes \mathrm{Tr})(V^*V) = I_p$, so that $I_p + \sum_{i=1}^d \mathrm{Tr}(A_i)Z_i = I_p$. This happens if and only if $\mathrm{Tr}(A_i) = 0$ for each $i$. 
\end{proof}

\begin{remark} The Theorem and proof above work replacing $M_k(\C)$ with an arbitrary von Neumann algebra $\mathcal{N}$: $\{V_i\}_{i=1}^p$ in $\mathcal{N}$ satisfy 
$U = \sum_{i=1}^p K_i \otimes V_i$ is unitary if and only if there exists $A \in \mathcal{N}^d$ such that $L_Z(A) \succeq 0$ and $L_Z(A)$ can be factored as $\sum_{i,j=1}^p E_{ij}\otimes V_i^*V_j$ for $V_i \in \mathcal{N}$.  
\end{remark}

We will end this section by discussing the solutions $\mathcal{D}_Z(1)$, which have a special interpretation. \\
We have already noted that if $\{Z_i\}_{i=1}^d$ are a basis for the orthogonal complement of $\mathrm{range}(\Phi^C)$, they are also a basis for $\mathrm{ker}(\Phi^{C\dagger})$. Recall that $\Phi^{C\dagger}$ is unital so long as $\Phi$ is trace-preserving; it is also, by Equation \ref{phicdag}, easy to see that $\Phi^{C\dagger}(A)^* = \Phi^{C\dagger}(A^*)$, so if $\Phi^{C\dagger}(A)=0$, the same is true for $A^*$. For this reason, we can always pick a basis for $\mathrm{kernel}(\Phi^{C\dagger})$ to be composed only of self-adjoint matrices. The next result is essentially a restatement of a theorem of Choi characterizing when a map $\Phi$ can be written as a convex combination of other completely positive maps.

\begin{thm}[Choi]\cite{choi}\label{choixtmpts} Let $\Phi$ be a trace-preserving completely positive map with Kraus operators $\{K_i\}_{i=1}^p$, and $\{Z_i\}_{i=1}^d$ a self-adjoint basis for $\mathrm{kernel}(\Phi^{C\dagger})$. Also let $K$ be the matrix whose columns are $k_i$, the vectorizations of the Kraus operators $k_i$, so that $C_{\Phi} = KK^*$.\\
Then $\mathcal{D}_Z(1)$ parametrizes the minimal face $\mathcal{F}$ of the set of trace-preserving completely positive maps that contains $\Phi$ as follows: for each $x\in \mathcal{D}_Z(1)$, and $Q^*Q = L_Z(x)$, the CP map whose Choi matrix is 
$$K^*(Q^*Q)^TK$$ is a trace-preserving CP map in the same face as $\Phi$.
\end{thm}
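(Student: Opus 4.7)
The plan is to identify the minimal face $\mc{F}$ of TPCP maps containing $\Phi$ via the Choi correspondence, then translate trace-preservation into a linear condition on $\mathrm{kernel}(\Phi^{C\dagger})$ so that the residual positivity requirement becomes precisely the LMI $L_Z(x) \succeq 0$.

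I would begin with the standard description of the face of $S_{n^2}^+$ containing $C_\Phi$: it consists of $C \succeq 0$ with $\mathrm{range}(C) \subseteq \mathrm{range}(C_\Phi) = \mathrm{range}(K)$. Intersecting with the affine slice of trace-preserving Choi matrices produces $\mc{F}$, and any $\Psi \in \mc{F}$ admits a representation $C_\Psi = KMK^*$ for some $M \in S_p^+$. Next, translating trace-preservation of $\Psi$ into a condition on $M$: factoring $M = \sum_\alpha w_\alpha w_\alpha^*$ yields Kraus operators $L_\alpha = \sum_i (w_\alpha)_i K_i$ for $\Psi$, and Equation \ref{phicdag} gives
\begin{equation*}
\sum_\alpha L_\alpha^* L_\alpha \;=\; \sum_{i,j=1}^p M_{ij} K_j^* K_i \;=\; \Phi^{C\dagger}(M^T),
\end{equation*}
so $\Psi$ is trace-preserving iff $\Phi^{C\dagger}(M^T) = I_n$. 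Because $\Phi$ itself is trace-preserving, $\Phi^{C\dagger}(I_p) = I_n$, and the condition reduces to $M^T - I_p \in \mathrm{kernel}(\Phi^{C\dagger})$.

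Since $M^T$ is Hermitian and $\{Z_i\}$ is a self-adjoint basis for $\mathrm{kernel}(\Phi^{C\dagger})$, one can write $M^T = I_p + \sum_{i=1}^d x_i Z_i = L_Z(x)$ for a unique $x \in \R^d$, and then $M \succeq 0$ is equivalent to $L_Z(x) \succeq 0$, i.e., $x \in \mc{D}_Z(1)$. Factoring $L_Z(x) = Q^*Q$ yields $M = (Q^*Q)^T$, so $C_\Psi = K(Q^*Q)^T K^*$, giving the asserted parametrization of $\mc{F}$ by $\mc{D}_Z(1)$. The main obstacle is the first step, i.e., cleanly justifying that the minimal face of TPCP maps containing $\Phi$ is precisely $\{\Psi \text{ TPCP} : \mathrm{range}(C_\Psi) \subseteq \mathrm{range}(C_\Phi)\}$. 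I would handle this by Choi's original argument: any convex decomposition $\Phi = t\Psi + (1-t)\Psi'$ within TPCP forces both summands' Choi matrices to have range inside $\mathrm{range}(C_\Phi)$ by positivity, while conversely, given such a $\Psi$, a small perturbation $\Phi \pm \epsilon(C_\Psi - C_\Phi)$ remains (after renormalization) in $\mc{F}$, placing $\Psi$ in the same minimal face as $\Phi$.
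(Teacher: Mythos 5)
Your proposal is correct and follows essentially the same route as the paper's own proof: both pass to Choi matrices, identify the minimal face via the range/kernel characterization of faces of the positive semidefinite cone (so that face members are exactly $C_\Psi = KMK^*$ with $M \succeq 0$), and use $\Phi^{C\dagger}(E_{ij}) = K_i^*K_j$ to turn trace-preservation into $M^T - I_p \in \mathrm{kernel}(\Phi^{C\dagger})$, i.e.\ $M^T = L_Z(x)$ with $x \in \mathcal{D}_Z(1)$. The only differences are minor: you supply the standard perturbation argument proving the facial characterization where the paper merely cites it (with, in fact, an empty citation), and your form $K(Q^*Q)^T K^*$ of the Choi matrix is the one the paper's proof actually uses (the $K^*(Q^*Q)^T K$ appearing in the theorem statement is a typo).
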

\begin{proof} As usual, we will provide our own proof, modified from Choi's original, as it is instructive for our purposes. Suppose $x \in \mathcal{D}_Z(1)\subseteq \R^d$, so $L_Z(x)=I_p + \sum_{i=1}^d x_i Z_i \succeq 0$. Then there exists $Q$ such that $Q^*Q = L_Z(x)$, and $$\Phi^{C\dagger}(Q^*Q) = \Phi^{C\dagger}(I_p) + \sum_{i=1}^d x_i \Phi^{C\dagger}(Z_i) = I_p.$$

By Equation \ref{phicdag}, $$\Phi^{C\dagger}(Q^*Q) = \sum_{i,j}(Q^*Q)_{ij}K_i^*K_j = \sum_{i,j=1}^p \sum_{k=1}^r \overline{q_{ki}}q_{kj}K_i^*K_j = \sum_{k=1}^r \bigl(\sum_{i=1}^p q_{ki}K_i\bigr)^*\bigl(\sum_{j=1}^r q_{kj}K_j\bigr) = I.$$

Let $\widehat{K_k} = \sum_{j=1}^r q_{kj}K_j$, and let $\widehat{\Phi}$ be the channel whose Kraus operators are $\{\widehat{K_k}\}_{k=1}^r$. The above calculation establishes that $\widehat{\Phi}$ is trace-preserving. We now show it lies in $\mathcal{F}$, the same face as $\Phi$. \\
To do so, we observe that if $\Phi$ is a convex combination of channels $\Phi=t\Phi_1+(1-t)\Phi_2$, then the same is true for the associated Choi matrices: $C_{\Phi}=tC_{\Phi_1}+(1-t)C_{\Phi_2}$, and so we just need to show that $C_{\Phi}$ and $C_{\widehat{\Phi}}$ lie in the same face of the convex subset of $S_{nm}^+$ satisfying $(\mathrm{id}\otimes\mathrm{Tr})(A)= I_n$. But the facial structure of this convex set is inherited from the facial structure of $S_{nm}^+$, where $A$ lies in the minimal face containing $B$ if and only if $\mathrm{kernel}(A)\subseteq \mathrm{kernel}(B)$\cite{}. \\
Since $C_{\widehat{\Phi}} = \sum_{i=1}^p \widehat{k_i}\widehat{k_i}^*=\widehat{K}\widehat{K}^*$, and since $\widehat{k_i} = \sum_{j=1}^r q_{ij}k_i$, we see that 
$$C_{\widehat{\Phi}}=\widehat{K}\widehat{K}^* = (KQ^T)(\overline{Q}K^*).$$
Suppose now $x\in \mathrm{kernel}(C_{\Phi})$; then $KK^*x = 0$, and so $K^*x=0$. But then $C_{\widehat{\Phi}}x = K(Q^T\overline{Q})K^*x = 0$ as well, and so $\mathrm{kernel}(C_{\Phi})\subseteq \mathrm{kernel}(C_{\widehat{\Phi}})$, and so $\widehat{\Phi}$ is contained in $\mathcal{F}$.\\
To prove the converse, we just reverse the steps: the face of completely positive maps containing $\Phi$ is the set of positive matrices of the form $K(Q^*Q)^TK^*$; in order to preserve the fact that $\Phi$ is trace-preserving, we need to restrict to $X$ such that $K(Q^*Q)^TK^*$ is the Choi matrix for a trace-preserving channel. The Kraus operators for this channel are $L_i=\sum_{j=1}^r q_{ji}K_i$, so we need that 
$$\sum_{i=1}^r L_i^*L_i = \sum_{i=1}^r \sum_{j,k=1}^p \overline{q_{ij}}q_{ik}K_j^*K_k = \Phi^{C\dagger}(Q^*Q)=I_n.$$

Since $\Phi^{C\dagger}(Q^*Q)= \Phi^{C\dagger}(I_p)=I_n$, $I_p -Q^*Q \in \mathrm{kernel}(\Phi^{C\dagger})$, so $Q^*Q = I_p + \sum_{i=1}^d x_i Z_i$ for some $x\in \R^d$, and clearly $Q^*Q \succeq 0$, so $x\in \mathcal{D}_Z(1)$. 
\end{proof}

\begin{remark} In the case where $\Phi$ is a Schur product channel $\Phi(X)=X\circ C$, the face $\mathcal{F}$ corresponds to the minimal face of $\mathcal{E}_n$ containing $C$. In this case, the result above is, essentially, a result of Li and Tam \cite{li} characterizing the facial structure of $\mathcal{E}_n$. 
\end{remark}

\begin{cor}[Choi]\cite{choi}\label{xtmmaps} A quantum channel $\Phi$ is an extreme point in the set of quantum channels if and only if the map $\Phi^{C\dagger}$ is invertible. 
\end{cor}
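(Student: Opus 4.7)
The plan is to derive this as a direct corollary of the preceding theorem, which identifies the minimal face $\mathcal{F}$ containing $\Phi$ (inside the set of trace-preserving CP maps) with the image of $\mathcal{D}_Z(1)$ under the map $x \mapsto \widehat{\Phi}_x$, where $\{Z_i\}_{i=1}^d$ is a self-adjoint basis for $\ker(\Phi^{C\dagger})$. Since $\Phi$ is extreme if and only if $\mathcal{F} = \{\Phi\}$, the question reduces to deciding when $\mathcal{D}_Z(1)$ yields only the single channel $\Phi$; I expect this to happen precisely when $d = 0$, that is, when $\Phi^{C\dagger}$ is invertible.

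For the easy direction, if $\Phi^{C\dagger}$ is invertible then $d = 0$, the LMI is vacuous (reducing to $I_p \succeq 0$), so $\mathcal{D}_Z(1)$ is a single point and $\mathcal{F} = \{\Phi\}$. For the converse, assuming $\Phi^{C\dagger}$ has nontrivial kernel, I would pick a nonzero self-adjoint $Z \in \ker(\Phi^{C\dagger})$ together with a $t > 0$ small enough that $I_p \pm tZ \succ 0$, and read off from the preceding theorem that the two resulting channels $\widehat{\Phi}_{\pm t} \in \mathcal{F}$ have Choi matrices $C_\Phi \pm t\, K Z^T K^*$. Averaging gives
\[ \Phi = \tfrac{1}{2}\bigl(\widehat{\Phi}_{+t} + \widehat{\Phi}_{-t}\bigr); \]
working with a minimal (rank $p$) Kraus decomposition, $K$ has linearly independent columns and $K^*$ is surjective onto $\mathbb{C}^p$, so $K Z^T K^* = 0$ would force $Z = 0$, contradicting $Z \neq 0$. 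This makes the convex combination non-trivial and obstructs extremality.

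The hard part will be confirming that the two sides of the symmetric perturbation genuinely produce distinct channels, equivalently that $K Z^T K^* \neq 0$: this is where the assumption of a minimal Kraus representation is essential, since only then is $K$ injective. Once this injectivity is in hand, the corollary follows immediately by combining the face description from the preceding theorem with the two-line perturbation argument above.
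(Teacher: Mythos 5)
Your proof is correct and follows exactly the route the paper intends: the corollary is stated there without proof as an immediate consequence of Theorem~\ref{choixtmpts}, and your derivation---extremality of $\Phi$ iff the face $\mathcal{F}$ parametrized by $\mathcal{D}_Z(1)$ collapses to $\{\Phi\}$, iff $d=0$, iff $\ker(\Phi^{C\dagger})=\{0\}$---is that intended argument. Your symmetric perturbation $I_p \pm tZ$, together with the observation that injectivity of $K$ (minimal Kraus decomposition) forces $KZ^TK^*\neq 0$, correctly supplies the non-extremality direction that the paper leaves implicit.
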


\begin{remark}\label{fibresdescend} For $A \in \mathcal{D}_Z(k)$, $(\mathrm{id}\otimes \mathrm{Tr})(L_Z(A))= Q^*Q$, where $\Phi^{C\dagger}(Q^*Q) = I$. To see this, recall that partial trace is completely positive, hence $(\mathrm{id}\otimes\mathrm{Tr})(L_Z(A)) = I_p + \sum_{i=1}^d \mathrm{Tr}(A_i)Z_i \succeq 0$, so $a:=\mathrm{Tr}(A) \in \mathcal{D}_Z(1)$, and $L_Z(a)$ is the image of this point, and hence satisfies the conditions of Theorem \ref{choixtmpts}. 
\end{remark}
%
%

\section{Convex Combinations and Matrix Factorizations}
We have already seen that if $\Phi$ is the convex combination of $\Phi_1$ and $\Phi_2$, which are factorizable by means of $\mathcal{N}_1$ and $\mathcal{N}_2$, then $\Phi$ is factorizable by means of the direct sum of $\mathcal{N}_i$. Our main result in this section is to establish a converse.

\begin{lem}\label{imageunderhomos} Suppose $\Phi$ is a quantum channel with Kraus operators $\{K_i\}_{i=1}^p$ and is factorized by the algebra $\mathcal{N}$, by means of $\{V_i\}_{i=1}^p$ in $\mathcal{N}$, so that $U = \sum_{i=1}^p K_i \otimes V_i$ is unitary. Let $\Psi:\mathcal{N}\rightarrow \mathcal{M}$ be a unital $*$-homomorphism; then $W = \sum_{i=1}^p K_i \otimes \Psi(V_i)$ is unitary as well, and yields a factorization for some channel in the same face as $\Phi$.
\end{lem}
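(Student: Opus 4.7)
The proof naturally splits into two parts: showing $W$ is unitary, and then identifying the channel factorized by $W$ and locating it within the face of $\Phi$.

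The first step is straightforward functoriality. The map $\mathrm{id}_{M_n(\C)} \otimes \Psi : M_n(\C) \otimes \mathcal{N} \to M_n(\C) \otimes \mathcal{M}$ is a unital $*$-homomorphism, and so preserves the relations $U^*U = I_n \otimes I_{\mathcal{N}} = UU^*$. Since $(\mathrm{id} \otimes \Psi)(U) = \sum_{i=1}^p K_i \otimes \Psi(V_i) = W$, we obtain $W^*W = I_n \otimes I_{\mathcal{M}} = WW^*$.

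For the second step, denote the trace on $\mathcal{M}$ by $\tau_{\mathcal{M}}$ and define
$$\widetilde{\Phi}(X) := (\mathrm{id}\otimes \tau_{\mathcal{M}})\bigl(W(X\otimes I_{\mathcal{M}})W^*\bigr),$$
which is automatically a quantum channel since $W$ is unitary (trace preservation follows by applying $\mathrm{id}\otimes\tau_{\mathcal{M}}$ to $W^*W = I$). Expanding yields
$$\widetilde{\Phi}(X) = \sum_{i,j=1}^p M_{ij}\, K_i X K_j^*, \qquad M_{ij} := \tau_{\mathcal{M}}\bigl(\Psi(V_iV_j^*)\bigr).$$
The matrix $M = (M_{ij})$ is positive semidefinite, being the Gram matrix of $\{\Psi(V_i)\}$ under the inner product induced by $\tau_{\mathcal{M}}$. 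Choose a factorization $M = NN^*$; then $\widehat{K}_k := \sum_i N_{ik} K_i$ is a set of Kraus operators for $\widetilde{\Phi}$, and each $\widehat{K}_k$ lies in the linear span of $\{K_i\}_{i=1}^p$.

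Finally, I pass to Choi matrices. Following the notation in the proof of \thmref{choixtmpts}, let $K = \sum_i k_i e_i^*$ be the matrix whose columns are the vectorized Kraus operators, so $C_\Phi = KK^*$. Then the vectorizations of $\widehat{K}_k$ are $\widehat{k}_k = \sum_i N_{ik} k_i$, i.e.\ the columns of $KN$, and consequently
$$C_{\widetilde{\Phi}} = (KN)(KN)^* = KMK^*.$$
Hence $\ker(K^*) \subseteq \ker(C_{\widetilde{\Phi}})$, i.e.\ $\ker(C_\Phi) \subseteq \ker(C_{\widetilde{\Phi}})$, and exactly as in the proof of \thmref{choixtmpts} this kernel inclusion places $\widetilde{\Phi}$ in the minimal face of trace-preserving completely positive maps containing $\Phi$.

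The only delicate point is bookkeeping with the $*$-homomorphism $\Psi$ when it is not trace-preserving: the $\Psi(V_i)$ need no longer be trace-orthonormal, so $\widetilde{\Phi}$ genuinely may differ from $\Phi$. This is precisely why the conclusion asserts that $W$ factorizes \emph{some} channel in the face of $\Phi$ rather than $\Phi$ itself, and the argument above shows that this is the worst that can happen.
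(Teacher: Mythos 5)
Your proof is correct, and while its skeleton matches the paper's (unitarity of $W$ from multiplicativity of $\Psi$, then face membership via the facial characterization underlying \thmref{choixtmpts}), both halves are executed along a genuinely different route. For unitarity, you apply the unital $*$-homomorphism $\mathrm{id}\otimes\Psi$ to the relations $U^*U=UU^*=I$, whereas the paper checks directly that the relations of \propref{factbycomplement} survive under $\Psi$, computing $\sum_{i,j}X_{ji}\Psi(V_i)^*\Psi(V_j)=\Psi\bigl(\sum_{i,j}X_{ji}V_i^*V_j\bigr)=I_{\mathcal{M}}$ for $X\in\mathrm{range}(\Phi^C)$; same mechanism, but your packaging is cleaner and makes the functoriality explicit. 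For face membership, you bypass the paper's LMI machinery entirely: the paper identifies $\sum_{i,j}E_{ij}\otimes\Psi(V_i)^*\Psi(V_j)$ with $L_Z(A)$ for some $A\in\mathcal{M}^d$, takes the partial trace to obtain $Q^*Q$, and then invokes Remark~\ref{fibresdescend} together with \thmref{choixtmpts}; you instead compute the Choi matrix directly as $C_{\widetilde{\Phi}}=KMK^*$ with $M\succeq 0$ the trace-Gram matrix of the $\Psi(V_i)$, and rerun the kernel-inclusion step from the proof of \thmref{choixtmpts}. The two computations agree --- your $M$ is exactly $(Q^*Q)^T$ in the paper's notation, by traciality of $\tau_{\mathcal{M}}$ --- so this is a difference of route, not of substance. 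What your route buys is self-containedness: no reference to $\mathcal{D}_Z$, $L_Z$, or Remark~\ref{fibresdescend} is needed. What the paper's route buys is continuity with the rest of the paper: the operators $Q_k^*Q_k$ set up in its proof are precisely the objects manipulated in the proof of \thmref{main}. Two small points you should tighten: your one-line justification of trace preservation (applying $\mathrm{id}\otimes\tau_{\mathcal{M}}$ to $W^*W=I$) yields $\sum_{i,j}\tau_{\mathcal{M}}\bigl(\Psi(V_i)^*\Psi(V_j)\bigr)K_i^*K_j=I_n$, while trace preservation of $\widetilde{\Phi}$ reads $\sum_{i,j}M_{ij}K_j^*K_i=I_n$, and these coincide only via the trace property $\tau_{\mathcal{M}}(ab)=\tau_{\mathcal{M}}(ba)$; and you implicitly use the normalization $\tau_{\mathcal{M}}(I_{\mathcal{M}})=1$, which is what makes that partial trace come out to $I_n$ rather than a scalar multiple of it.
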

\begin{proof} Let $X \in \mathrm{range} (\Phi^C)$. Then 
\begin{align*} \sum_{i,j} X_{ji} \Psi(V_i^*)\Psi(V_j) & = \sum_{i,j} X_{ji}\Psi(V_i^*V_j) \\
& = \Psi(\sum_{i,j} X_{ji}V_i^*V_j) \\
& = \Psi(I_{\mathcal{N}}) \\
& = I_{\mathcal{M}}.
\end{align*}
So $W$ is unitary, and therefore $\sum_{i,j}E_{ij}\otimes \Psi(V_i^*)\Psi(V_j) = L_Z(A)$ for some $A \in \mathcal{M}^d$.\\
Let $\Phi_k(Y) = (\mathrm{id}\otimes \mathrm{Tr})\bigl(W( Y\otimes I_{\mathcal{N}})W^*\bigr)$. If $Q^*Q = (\mathrm{id}\otimes \mathrm{Tr})(L_Z(A))$ then 
\begin{align*}\Phi_k(Y) &= \sum_{i,j}(Q^*Q)_{ij} K_iYK_j^*\\
& = \sum_{i,j} \sum_k \overline{q_{ki}}q_{kj}K_iYK_j^* \\
& = \sum_k (\sum_i \overline{q_{ki}}K_i)Y(\sum_j q_{kj}K_j^*)\\
& = \sum_k \widehat{K_k}Y\widehat{K_k}^*.\end{align*} 
By Remark \ref{fibresdescend}, $\Phi^{C\dagger}(Q^*Q) = I$, and so Theorem \ref{choixtmpts} guarantees that $\Phi_k$ is in the same face as $\Phi$. 
\end{proof}
\begin{remark} Notice that if $\Psi$ is an isometry, then $W$ is a factorization for $\Phi$ itself. Only in the case that $\Psi$ is a $*$-homomorphism, but is not isometric, do we get factorizations for other channels in the same face as $\Phi$. 
\end{remark}


%

We are now in a position to prove the converse to Proposition \ref{factisconv}.
\begin{thm}\label{main} Let $\Phi$ be a quantum channel with Kraus operators $\{K_i\}_{i=1}^p$. Suppose that $\mathcal{N} \simeq \bigoplus_{k=1}^M \mathcal{N}_k$ with trace $\mathrm{Tr}_{\mathcal{N}}(\oplus_{k=1}^M A_k) = \sum_{k=1}^M q_k \mathrm{Tr}_{\mathcal{N}_k}(A_k)$ is a factorization for $\Phi$, by means of the unitary $U = \sum_{i=1}^p K_i\otimes V_i$, where $\{V_i\}_{i=1}^p$ are trace-orthonormal in $\mathcal{N}$. Then there exist quantum channels $\Phi_k$ such that $\Phi = \sum_{k=1}^M q_k \Phi_k$, and each $\Phi_k$ is factorizable by $\mathcal{N}_k$. 
\end{thm}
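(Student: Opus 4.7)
My proposal is to slice the factorization of $\Phi$ along the direct sum decomposition of $\mathcal{N}$, show each slice factorizes a quantum channel $\Phi_k$, and then verify the convex combination by direct computation of partial traces.

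First I would write each $V_i$ in its block-diagonal form with respect to the decomposition $\mathcal{N} \simeq \bigoplus_{k=1}^M \mathcal{N}_k$: that is, $V_i = \bigoplus_{k=1}^M V_i^{(k)}$ with $V_i^{(k)}\in \mathcal{N}_k$. For each fixed $k$, let $\Psi_k:\mathcal{N}\rightarrow \mathcal{N}_k$ be the canonical projection, which is a unital $*$-homomorphism. By \lemref{imageunderhomos}, the operator $U_k := \sum_{i=1}^p K_i \otimes V_i^{(k)} = \sum_{i=1}^p K_i \otimes \Psi_k(V_i)$ is an isometry in $M_n(\C)\otimes \mathcal{N}_k$. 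Since $\mathcal{N}_k$ is a direct summand of the finite tracial von Neumann algebra $\mathcal{N}$, it is itself finite, and hence any isometry in $M_n(\C)\otimes \mathcal{N}_k$ is automatically unitary. So $U_k$ is unitary and, by condition (2) of \thmref{factdefn}, defines a factorizable quantum channel
$$\Phi_k(X) := (\mathrm{id}\otimes \mathrm{Tr}_{\mathcal{N}_k})\bigl(U_k(X\otimes I_{\mathcal{N}_k})U_k^*\bigr)$$
which is factorized by $\mathcal{N}_k$ (and is automatically a quantum channel because $U_k$ is unitary).

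Next I would compute both sides. Expanding the partial traces,
$$\Phi(X) = \sum_{i,j=1}^p K_i X K_j^*\, \mathrm{Tr}_{\mathcal{N}}(V_i V_j^*), \qquad \Phi_k(X) = \sum_{i,j=1}^p K_i X K_j^*\, \mathrm{Tr}_{\mathcal{N}_k}(V_i^{(k)} V_j^{(k)*}).$$
The trace decomposition $\mathrm{Tr}_{\mathcal{N}}(A) = \sum_{k=1}^M q_k \mathrm{Tr}_{\mathcal{N}_k}(A_k)$ applied to $A = V_i V_j^* = \bigoplus_k V_i^{(k)} V_j^{(k)*}$ gives
$$\mathrm{Tr}_{\mathcal{N}}(V_i V_j^*) = \sum_{k=1}^M q_k\, \mathrm{Tr}_{\mathcal{N}_k}(V_i^{(k)} V_j^{(k)*}),$$
so substituting into the expression for $\Phi(X)$ yields $\Phi(X) = \sum_{k=1}^M q_k \Phi_k(X)$, as desired.

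The main delicate point, I expect, will be ensuring that the formal pieces $U_k$ actually yield honest factorizations of quantum channels: the Kraus set $\{V_i^{(k)}\}_{i=1}^p$ will generally fail to be trace-orthonormal inside $\mathcal{N}_k$ (their norms are scaled by the weights $q_k$), so I cannot directly invoke condition (1) of \thmref{factdefn} or \propref{factbycomplement}. Instead I lean on condition (2) and the fact that in a finite von Neumann algebra every isometry is a unitary, which is what lets \lemref{imageunderhomos} upgrade the identity $\sum X_{ji} \Psi_k(V_i)^*\Psi_k(V_j) = I_{\mathcal{N}_k}$ to honest unitarity of $U_k$. Once unitarity is in hand, the rest is bookkeeping with the trace.
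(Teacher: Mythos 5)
Your proposal is correct and takes essentially the same route as the paper: you slice the factorization through the canonical $*$-homomorphisms $\Psi_k$ onto the direct summands, invoke \lemref{imageunderhomos} to obtain unitarity of each $U_k = \sum_i K_i \otimes \Psi_k(V_i)$, and then use the weighted trace decomposition together with trace-orthonormality of the $V_i$ to conclude $\Phi = \sum_{k=1}^M q_k \Phi_k$. The only differences are cosmetic: the paper performs the final verification at the level of Choi matrices, via the identity $\sum_{k=1}^M q_k Q_k^*Q_k = I_p$, whereas you compute directly on the maps; and your observation that an isometry in $M_n(\C)\otimes\mathcal{N}_k$ is automatically unitary by finiteness is a careful gloss on the lemma rather than a departure from it.
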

\begin{proof} $\mathcal{N}$ admits $*$-homomorphisms $\Psi_k$ onto each of its direct summands: $\Psi_k(A) = A_k$ when $A \simeq \oplus_{k=1}^M A_k$. Then $\Psi_k(V_i) \in \mathcal{N}_k$, and by Lemma \ref{imageunderhomos} $U_k := \sum_{i=1}^p K_i \otimes \Psi_k(V_i) \in M_n(\C)\otimes \mathcal{N}_k$ is unitary and yields a factorization for the channel $\Phi_k$ determined by $Q_k^*Q_k = (\mathrm{id}\otimes \mathrm{Tr}_{\mathcal{N}_k})\bigl(\sum_{i,j} E_{ij} \otimes \Psi_k(V_i^*)\Psi_k(V_j)\bigr)$.\\
Finally, since $\mathrm{Tr}_{\mathcal{N}}(V_i^*V_j ) = \sum_{k=1}^M q_k \mathrm{Tr}_{\mathcal{N}_k}(\Psi_k(V_i^*)\Psi_k(V_j))$ we have that 
\begin{align*} I_p &= \sum_{i,j=1}^p E_{ij}\otimes \mathrm{Tr}_{\mathcal{N}}(V_i^*V_j) \\
& = \sum_{k=1}^M \sum_{i,j=1}^p E_{ij}\otimes \mathrm{Tr}_{\mathcal{N}_k}(\Psi_k(V_i^*)\Psi_k(V_j)) \\
& = \sum_{k=1}^M q_kQ_k^*Q_k.\end{align*}

Since each $\Phi_k$ has Choi matrix $C_{\Phi_k} = K(Q_k^*Q_k)^TK^*$, using the notation from Theorem \ref{choixtmpts} we see that
$$C_{\Phi} = KK^* = K(\sum_{k=1}^M q_k(Q_k^*Q_k)^T)K =\sum_{k=1}^m q_k C_{\Phi_k}$$
and hence $\Phi=\sum_{k=1}^M q_k \Phi_k$.
\end{proof}

\begin{remark} Notice that the above proof can fairly easily be modified to deal with direct integrals, with the caveat that the map from $\mathcal{N}$ to each of its direct integrands is a $*$-homomorphism almost everywhere.
\end{remark}
Since every von Neumann algebra is a direct integral of factors \cite{von}, we have proven the following:
\begin{cor}
If $\Phi$ is factorizable by an algebra $\mathcal{N}$, either $\mathcal{N}$ is a factor, or $\Phi$ is a convex combination of channels $\Phi_k$ each of which is factorized by a factor $\mathcal{N}_k$.\\
In particular, this proves that if $\Phi$ is factorizable by a direct integral of factors of type $I$, it is matrix factorizable; since if $\Phi$ is in the convex hull for $\{\Phi_k\}_{k\in X}$ for $k$ some measure space $X$, by Carath{e}odory's theorem we can choose a finite subset of these points and then $\Phi$ is a convex combination of a finite number of matrix factors.  
\end{cor}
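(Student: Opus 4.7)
The plan is to combine three ingredients: the central decomposition of an arbitrary von Neumann algebra into a direct integral of factors \cite{von}, the direct integral analogue of \thmref{main} flagged in the preceding remark, and Carath\'eodory's theorem applied to the finite-dimensional convex set of quantum channels on $M_n(\C)$.

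For the first conclusion, assume $\mathcal{N}$ is not a factor, and write $\mathcal{N} \simeq \int_X^{\oplus} \mathcal{N}_x\, d\mu(x)$ with each $\mathcal{N}_x$ a factor, so that $\mathrm{Tr}_{\mathcal{N}}(A) = \int_X q(x)\,\mathrm{Tr}_{\mathcal{N}_x}(A_x)\, d\mu(x)$ for a suitable density $q$. The coordinate evaluations $\Psi_x\colon \mathcal{N}\to \mathcal{N}_x$ are unital $*$-homomorphisms for $\mu$-a.e. $x$, so by \lemref{imageunderhomos} each $U_x := \sum_i K_i\otimes \Psi_x(V_i)$ is unitary and provides a factorization, via $\mathcal{N}_x$, of a channel $\Phi_x$ lying in the face of $\Phi$. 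Exactly as in the proof of \thmref{main}, the trace-orthonormality identity $\sum_{i,j} E_{ij}\otimes \mathrm{Tr}_{\mathcal{N}}(V_i^*V_j) = I_p$ decomposes as $\int_X q(x)\, Q_x^*Q_x\, d\mu(x) = I_p$, where $Q_x^*Q_x = (\id\otimes \mathrm{Tr}_{\mathcal{N}_x})\bigl(\sum_{i,j} E_{ij}\otimes \Psi_x(V_i^*)\Psi_x(V_j)\bigr)$. Passing to Choi matrices via $C_{\Phi_x} = K(Q_x^*Q_x)^T K^*$ gives $C_{\Phi} = \int_X q(x)\, C_{\Phi_x}\, d\mu(x)$, exhibiting $\Phi$ as a (continuous) convex combination of channels each factorized by a factor.

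For the matrix factorizability statement, specialize to the case where every $\mathcal{N}_x$ is type $I$, hence a matrix algebra, so each $\Phi_x$ is matrix factorizable by definition. Quantum channels on $M_n(\C)$ are parametrized via Choi matrices by a compact convex subset of a real affine space of dimension at most $n^4$, inside which all the $\Phi_x$ sit. Applying Carath\'eodory's theorem to the probability measure $q(x)\, d\mu(x)$ on this finite-dimensional set produces finitely many points $x_1,\dots, x_N$ with $N \leq n^4+1$ and weights $p_k\geq 0$ summing to $1$ such that $\Phi = \sum_{k=1}^N p_k \Phi_{x_k}$. Iterating \propref{factisconv} over these $N$ summands packages the individual matrix factorizations of the $\Phi_{x_k}$ into a single matrix algebra factorization of $\Phi$.

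The main obstacle, which is also what the remark preceding the corollary glosses over, is the measurability bookkeeping: one has to check that $x\mapsto \Psi_x(V_i)$ is a measurable section, hence $x\mapsto U_x$ and $x\mapsto Q_x^*Q_x$ are measurable fields, so that the field $x\mapsto C_{\Phi_x}$ is measurable and the Bochner-type integral defining $C_\Phi$ makes sense. This is standard in the direct-integral machinery of \cite{von} but is the only genuinely non-formal point in upgrading \thmref{main} from finite direct sums to direct integrals.
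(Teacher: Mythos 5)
Your proposal is correct and follows essentially the same route as the paper: central decomposition of $\mathcal{N}$ into a direct integral of factors, the direct-integral upgrade of \thmref{main} via \lemref{imageunderhomos}, and Carath\'eodory's theorem applied in the finite-dimensional space of Choi matrices, with \propref{factisconv} packaging the finitely many matrix factorizations into a single one. You actually spell out more than the paper, whose proof consists of the one sentence preceding the corollary together with the in-line Carath\'eodory remark; the measurability bookkeeping you flag is exactly the caveat the paper's remark on direct integrals acknowledges and leaves to the reader.
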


In the case of matrix factorizations, we see that either $\Phi$ is factorizable by a factor $M_k(\C)$ for some $k$, or $\Phi$ is a convex combination of channels $\Phi_k$ each of which can be factored by a factor of type $I_{i_k}$.\\
The distinction between factorization by factors and by direct sums of factors allows us to formulate a sufficient condition for testing when a channel is extreme in the set of factorizable channels. To do so, recall that a matrix factorization by a factor of type $I_k$ corresponds to $A \in \mathcal{D}_Z(k)$ with $\mathrm{rank}(L_Z(A)) \leq k$ and $\mathrm{Tr}(A) = 0$.
\begin{cor}\label{sffcnt4xtm} Let $\Phi$ be a quantum channel that is matrix factorizable. Suppose the only solutions to $A\in \mathcal{D}_Z(k)$ and $\mathrm{rank}(L_Z(A))\leq k$ satisfy $\mathrm{Tr}(A) = 0$. Then $\Phi$ is extreme in the set of matrix factorizable channels. 
\end{cor}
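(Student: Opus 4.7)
The plan is to prove the contrapositive: if $\Phi$ is matrix factorizable but \emph{not} extreme in the set of matrix factorizable channels, then the LMI $L_Z$ admits a solution $A \in \mathcal{D}_Z(k)$ with $\mathrm{rank}(L_Z(A)) \leq k$ and $\mathrm{Tr}(A) \neq 0$, contradicting the hypothesis. So assume $\Phi = t\Phi_1 + (1-t)\Phi_2$ with $t \in (0,1)$, $\Phi_1 \neq \Phi_2$, both matrix factorizable; without loss of generality $\Phi_1 \neq \Phi$. Let $\mathcal{N}_j = M_{k_j}(\C)$ be a factor factorizing $\Phi_j$ with normalized trace $\tau_j$. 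By Proposition \ref{factisconv}, $\Phi$ is then factorized by the matrix algebra $\mathcal{N} := \mathcal{N}_1 \oplus \mathcal{N}_2$ via a unitary $U = \sum_{i=1}^P K_i \otimes V_i$, with the $\{V_i\}$ trace-orthonormal under the weighted trace $\mathrm{Tr}_{\mathcal{N}} = t\tau_1 + (1-t)\tau_2$, and $\{K_i\}_{i=1}^P$ the associated Kraus system for $\Phi$.

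Next, apply the canonical projection $\Psi_1 : \mathcal{N} \to \mathcal{N}_1$, which is a unital $*$-homomorphism. By Lemma \ref{imageunderhomos}, $U^{(1)} := \sum_i K_i \otimes \Psi_1(V_i)$ is unitary in $M_n(\C) \otimes M_{k_1}(\C)$. The LMI characterization theorem of Section 3 then produces $A^{(1)} \in \mathcal{D}_Z(k_1)$ with $\mathrm{rank}(L_Z(A^{(1)})) \leq k_1$, and the two expressions
\[
L_Z(A^{(1)}) = I_P \otimes I_{k_1} + \sum_l Z_l \otimes A^{(1)}_l = \sum_{i,j} E_{ij} \otimes \Psi_1(V_i)^* \Psi_1(V_j).
\]
Applying $\mathrm{id}\otimes \tau_1$ to both sides and equating gives $G - I_P = \sum_l \tau_1(A^{(1)}_l)\, Z_l$, where $G := \bigl(\tau_1(\Psi_1(V_i)^* \Psi_1(V_j))\bigr)_{i,j} \in M_P(\C)$ is the Gram matrix of the $\Psi_1(V_i)$ with respect to $\tau_1$.

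It remains to verify $G \neq I_P$. By the Choi-matrix calculation from the proof of Theorem \ref{main}, the channel factorized by $U^{(1)}$ has Choi matrix $K G^T K^*$, where $K$ is the $n^2 \times P$ matrix whose columns are the vectorized $K_i$; tracking this computation through Proposition \ref{factisconv} identifies this channel with $\Phi_1$ itself. If $G$ were $I_P$, we would have $C_{\Phi_1} = KK^* = C_{\Phi}$, contradicting $\Phi_1 \neq \Phi$. Hence $G \neq I_P$, so $\sum_l \tau_1(A^{(1)}_l)\, Z_l \neq 0$; since $\{Z_l\}$ is linearly independent, some $\tau_1(A^{(1)}_l)$, and hence some $\mathrm{Tr}(A^{(1)}_l)$, is nonzero. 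This is the required contradiction.

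The main obstacle is the third paragraph: translating the algebraic condition $\Phi_1 \neq \Phi$ into the matrix inequality $G \neq I_P$. This translation rests on the Choi-matrix formula from Theorem \ref{main}, which identifies the channel factorized by $U^{(1)}$ with the one whose Choi matrix is $K G^T K^*$. Everything else is essentially bookkeeping across Proposition \ref{factisconv}, Lemma \ref{imageunderhomos}, and the LMI correspondence.
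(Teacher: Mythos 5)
Your overall route --- argue the contrapositive, use Proposition~\ref{factisconv} to turn a nontrivial convex decomposition $\Phi = t\Phi_1+(1-t)\Phi_2$ into a factorization by $\mathcal{N}_1\oplus\mathcal{N}_2$, project onto a summand via Lemma~\ref{imageunderhomos}, and force a nonzero trace by identifying the projected channel with $\Phi_1\neq\Phi$ through its Choi matrix --- is exactly the argument the Corollary is meant to encode, and your third paragraph (if the Gram matrix were the identity then $C_{\Phi_1}=KK^*=C_\Phi$) is sound. The genuine gap is in \emph{which} LMI your solution $A^{(1)}$ solves. The data $Z$, $L_Z$, $\mathcal{D}_Z$ in the Corollary's hypothesis are built once and for all from a fixed Kraus system $\{M_a\}_{a=1}^{p}$ for $\Phi$ (in the paper's usage a minimal one, $p=\mathrm{rank}\,C_\Phi$), since $\{Z_l\}$ is a basis of $\mathrm{range}(\Phi^C)^{\perp}$. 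Your $A^{(1)}$ instead solves the LMI attached to the \emph{inflated} Kraus system $\{\sqrt{t}K_i\}\cup\{\sqrt{1-t}L_j\}$ of size $P$ handed to you by Proposition~\ref{factisconv}: your own display reads $I_P\otimes I_{k_1}+\sum_l Z_l\otimes A^{(1)}_l$ with $P\times P$ matrices $Z_l$. These are different LMIs, and the trace-zero condition does not transfer between them. Worse, with respect to any non-minimal Kraus system the Corollary's hypothesis is violated by \emph{every} matrix factorizable channel, extreme or not: if $\{W_a\}\subseteq M_k(\C)$ is a trace-orthonormal factorization with respect to the minimal system, and $K_i'=\sum_a \overline{u_{ia}}M_a$ with $u^*u=I_p$ is any non-minimal system (the isometry $u$ as in the remark following Remark~\ref{krausunique}), then $V_i':=\sum_a u_{ia}W_a$ makes $\sum_i K_i'\otimes V_i'$ unitary with the required rank bound, but its Gram matrix is the rank-$p$ projection $(uu^*)^T\neq I_P$, hence the corresponding solution has nonzero trace. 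So exhibiting a nonzero-trace, low-rank solution of the inflated LMI contradicts nothing, and your final sentence does not close the proof.

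The repair is short and keeps your main idea intact. Express the inflated Kraus operators through the minimal ones via the isometry $u$ and set $W_a:=\sum_i\overline{u_{ia}}\,\Psi_1(V_i)\in M_{k_1}(\C)$; then
$$U^{(1)}=\sum_{i=1}^{P} K_i\otimes\Psi_1(V_i)=\sum_{a=1}^{p} M_a\otimes W_a,$$
so the converse direction of the Section 3 theorem now yields $A\in\mathcal{D}_Z(k_1)$ for the Corollary's own $Z$, with $\mathrm{rank}(L_Z(A))\leq k_1$ and
$$(\mathrm{id}\otimes\mathrm{Tr}_{\mathcal{N}_1})\bigl(L_Z(A)\bigr)=G_W:=\bigl(\mathrm{Tr}_{\mathcal{N}_1}(W_a^*W_b)\bigr)_{a,b}=I_p+\sum_l \mathrm{Tr}_{\mathcal{N}_1}(A_l)Z_l.$$
Your Choi-matrix argument then applies verbatim to $G_W$: the channel factorized by $U^{(1)}$ is $\Phi_1$ (your tracking through Proposition~\ref{factisconv} is correct, since $U^{(1)}$ is precisely the factorization unitary of $\Phi_1$), its Choi matrix is $K_{\min}G_W^T K_{\min}^*$ by Theorem~\ref{main}, where $K_{\min}$ has the vectorized $M_a$ as columns; if $\mathrm{Tr}(A)=0$ then $G_W=I_p$ and $C_{\Phi_1}=K_{\min}K_{\min}^*=C_\Phi$, contradicting $\Phi_1\neq\Phi$. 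With that change of frame inserted, your proof is correct and is essentially the paper's intended argument.
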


In the case of Schur product channels, this condition can be stated a little more cleanly, owing to the observation that in this case, for $A \in \mathcal{D}_Z(k)$, $\mathrm{rank}(L_Z(A)) \geq k$. To see that this is true, recall that $\{Z_i\}_{i=1}^d$ are a basis for $\{w_iw_i^*\}^{\perp n}_{i=1}$ where $\{w_i\}$ are Gram vectors for the correlation matrix $C$. We can, without loss of generality, apply a unitary so that $w_1 = e_1$, hence $Z_{i_{11}} = 0$ for all $i$, and so 
$$L_Z(A) = I_p\otimes I_k + \sum_{i=1}^d Z_i\otimes A_i$$ has $I_k$ as its $(1,1)$ block.\\
Hence, in the case of a Schur product channel, Corollary \ref{sffcnt4xtm} can be phrased as, $\Phi$ factorizable is extreme in the factorizable correlation matrices if the only $A \in \mathcal{D}_Z(k)$ satisfying $\mathrm{rank}(L_Z(A)) = k$ satisfy $\mathrm{Tr}(A) = 0$.

\begin{example} The following example is borrowed from Haagerup and Musat \cite{haagerup}, where they proved that the correlation matrix $$ C = 
\begin{pmatrix} 1 & \beta & \beta & \beta & \beta & \beta \\
\beta & 1 & \beta & -\beta & -\beta & -\beta \\
\beta & \beta & 1 & \beta & -\beta & -\beta \\
\beta & -\beta & \beta & 1 & \beta & -\beta \\
\beta & -\beta & -\beta & \beta & 1 & \beta \\
\beta & -\beta & -\beta & -\beta & \beta & 1 \end{pmatrix}, \beta = \frac{1}{\sqrt{5}}$$ is factorizable, but is not random unitary. We will now show that in fact, it cannot be written as a convex combination of factorizable channels in any non-trivial way.
A set of Gram vectors for this matrix is 
\begin{align*}w_1 &= (1,0,0)^T \\ 
w_2 &= \frac{1}{\sqrt{5}}(1,\sqrt{2},\sqrt{2})^T \\ 
w_3 &= \frac{1}{\sqrt{5}}(1,\sqrt{2}\omega,\sqrt{2}\omega^4)^T \\ 
w_4 &= \frac{1}{\sqrt{5}}(1,\sqrt{2}\omega^2,\sqrt{2}\omega^3)^T \\
w_5 &=\frac{1}{\sqrt{5}}(1,\sqrt{2}\omega^3,\sqrt{2}\omega^2)^T\\ 
w_6 &= \frac{1}{\sqrt{5}}(1,\sqrt{2}\omega^4,\sqrt{2}\omega)^T\end{align*} 
for $\omega$ a primitive fifth root of unity. Then a basis for $\mathrm{ker}(\Phi^{C\dagger})$ is 
$$Z_1 = \begin{pmatrix} 0 & 0 & 0 \\ 0 & \sqrt{2} & 0 \\ 0 & 0 & -\sqrt{2} \end{pmatrix}, Z_2 = \begin{pmatrix} 0 & 1 & -1 \\ 1 & 0 & 0 \\ -1 & 0 & 0 \end{pmatrix}, Z_3 = i\begin{pmatrix} 0 & 1 & 1 \\ -1 & 0 & 0 \\ -1 & 0 & 0 \end{pmatrix}$$ since $w_i^*Z_jw_i = 0$ for all $i,j$. 
Then 
$$L_Z(A) =  I_3\otimes I_k + \sum_{i=1}^3 Z_i\otimes A_i = \begin{pmatrix} I_k & A_2 + iA_3 & -A_2 + iA_3 \\ A_2-iA_3 & I_k +\sqrt{2}A_1 & 0 \\ -A_2-iA_3 & 0 & I_k -\sqrt{2}A_1\end{pmatrix}$$ which has rank $k$ if and only if the Schur complement 
$$\begin{pmatrix} I_k + \sqrt{2}A_1 & 0 \\ 0 & I_k -\sqrt{2}A_1 \end{pmatrix} - \begin{pmatrix} A^* \\ -A \end{pmatrix} I_k^{-1}\begin{pmatrix} A & -A^*\end{pmatrix} = \begin{pmatrix} I_k + \sqrt{2}A_1 - A^*A & A^{*2} \\ A^2 & I_k - \sqrt{2}A_1 - AA^*\end{pmatrix}$$ is $0$, where $A = A_2 +iA_3$. 
This gives the equations 
\begin{align*}I_k + \sqrt{2}A_1 &= A^*A\\ 
I_k - \sqrt{2}A_1 &= AA^*\\
A^2 = A^{*2}&=0.\end{align*} 
Since $A$ is nilpotent, $\mathrm{Tr}(A)=0$, and hence also 
$$\mathrm{Tr}(A_2) = \frac{1}{2}(\mathrm{Tr}(A)+\mathrm{Tr}(A^*))=0$$ 
and 
$$\mathrm{Tr}(A_3) = \frac{1}{2i}(\mathrm{Tr}(A)-\mathrm{Tr}(A^*)) = 0.$$ 
Finally, $A_1 = \frac{1}{2\sqrt{2}}(A^*A-AA^*)$, and so also $$\mathrm{Tr}(A_1)=0.$$\\
By Corollary \ref{sffcnt4xtm}, the channel $\Phi(X)=X\circ C$ is an extreme point in the set of factorizable Schur product channels.
\end{example}
 \vspace{0.1in}

{\noindent}{\it Acknowledgements.} J.L. acknowledges support from the  the National Research Foundation of South Africa, NRF CPRR grant number 90551.

\bibliographystyle{plain}
\bibliography{schur}

\end{document}